\newcommand{\customlabel}[2]{%
   \protected@write \@auxout {}{\string \newlabel {#1}{{#2}{\thepage}{#2}{#1}{}} }%
   \hypertarget{#1}{#2}
}
\crefname{algocf}{Algorithm}{Algorithms}
\crefname{observation}{Observation}{Observations}
\spnewtheorem{observation}[theorem]{Observation}{\bfseries}{\itshape}
\spnewtheorem*{myclaim}{Claim}{\bfseries}{\itshape}
\def\mathrlap{\mathpalette\mathrlapinternal} 
\def\mathrlapinternal#1#2{\rlap{$\mathsurround=0pt#1{#2}$}}
\newcommand{\elsum}[1]{\sum_{\mathrlap{#1}}\;}
\tikzstyle{normalNode}=[draw=black, fill=black, circle, minimum size=2mm, inner sep=0mm]
\tikzstyle{normalEdge}=[black, thick, >=stealth]
\newcommand{\st}{\,:\,}
\newcommand{\prob}[1]{\mathbf{Pr}\left[#1\right]}
\newcommand{\sets}{\mathcal{P}} 
\begin{document}

\title{Decomposition of Probability Marginals for Security Games in Max-Flow/Min-Cut Systems\footnote{A preliminary version of this work has appeared in the proceedings of IPCO 2023 under the title \emph{Decomposition of Probability Marginals for Security Games in Abstract Networks}~\citep{matuschke2023decomposition}.}}

\titlerunning{Decomposition of Probability Marginals in Max-Flow/Min-Cut Systems}

\author{Jannik Matuschke}

\institute{%
	Jannik Matuschke \at Research Center for Operations Management, KU Leuven;\\ \email{jannik.matuschke@kuleuven.be}
}%

\maketitle

\begin{abstract}
Given a set system $(E, \sets)$ with $\rho \in [0, 1]^E$ and $\pi \in [0,1]^{\sets}$, our goal is to find a probability distribution for a random set $S \subseteq E$ such that $\prob{e \in S} = \rho_e$ for all $e \in E$ and $\prob{P \cap S \neq \emptyset} \geq \pi_P$ for all $P \in \sets$.
We extend the results of \citet*{dahan2021probability} who studied this problem motivated by a security game in a directed acyclic graph~(DAG).

We focus on the setting where $\pi$ is of the affine form $\pi_P = 1 - \sum_{e \in P} \mu_e$ for~\mbox{$\mu \in [0, 1]^E$}.
A necessary condition for the existence of the desired distribution is that $\sum_{e \in P} \rho_e \geq \pi_P$ for all $P \in \sets$.
We show that this condition is sufficient if and only if $\sets$ has the \emph{weak max-flow/min-cut property}.
We further provide an efficient combinatorial algorithm for computing the corresponding distribution in the special case where~$(E, \sets)$ is an \emph{abstract network}. 
As a consequence, equilibria for the security game in~\cite{dahan2021probability} can be efficiently computed in a wide variety of settings (including arbitrary digraphs).

As a subroutine of our algorithm, we provide a combinatorial algorithm for computing shortest paths in abstract networks, partially answering an open question by \citet{mccormick1996polynomial}.
We further show that a conservation law proposed in~\cite{dahan2021probability} for the requirement vector $\pi$ in DAGs can be reduced to the setting of affine requirements described above.
\end{abstract}

\section{Introduction}
\label{sec:introduction}

Consider a set system $(E, \sets)$, where $E$ is a finite ground set and $\sets \subseteq 2^E$ is a collection of subsets of~$E$.
Given \emph{probability marginals} $\rho \in [0, 1]^E$ and \emph{requirements}
$\pi \in [0,1]^{\sets}$, we are interested in finding a probability distribution on the power set $2^E$ of $E$ that is consistent with these marginals and that ensures that each set in $P \in \sets$ is hit with probability at least $\pi_P$. In other words, we are looking for a solution~$x$ to the system
\begin{alignat}{3}
    \textstyle\sum_{S \subseteq E : e \in S} x_S & \;=\; \rho_e & \quad \forall\, e \in E,\label{eq:consistency}\\
    \textstyle\sum_{S \subseteq E : S \cap P \neq \emptyset} x_S & \;\geq\; \pi_P & \quad \forall\, P \in \mathcal{P},\label{eq:covering}\\
    \textstyle\sum_{S \subseteq E} x_S & \;=\; 1, &\label{eq:simplex}\\
    x_S & \;\geq\; 0 & \forall\, S \subseteq E. \label{eq:nonneg}
\end{alignat}
Throughout this paper, we will call a distribution $x$ fulfilling \eqref{eq:consistency} to \eqref{eq:nonneg} a \emph{feasible decomposition of $\rho$ for $(E, \sets, \pi)$}, and we will say that the marginals $\rho$ are \emph{feasible for $(E, \sets, \pi)$} if such a feasible decomposition exists.

A necessary condition for the existence of a feasible decomposition is that the marginals suffice to cover each set of the system individually, i.e., 
\begin{align}
    \textstyle \sum_{e \in P} \rho_e \geq \pi_P  \qquad \forall\, P \in \mathcal{P}. \tag{$\star$}\label{eq:cond}
\end{align}
We are particularly interested in identifying classes of systems and requirement functions for which \eqref{eq:cond} is not only a necessary but also a sufficient condition.
We say that $(E, \sets, \pi)$ is \emph{\eqref{eq:cond}-sufficient} if for all $\rho \in [0, 1]^E$ it holds that $\rho$ is feasible if and only if \eqref{eq:cond} is fulfilled.
For such systems, the set of distributions on $2^E$ fulfilling \eqref{eq:covering} can be described by the polytope of feasible marginals defined by the constraints~\eqref{eq:cond}, which is of exponentially lower dimension.

\subsection{Motivation}
\label{sec:motivation}

A natural application for feasible decompositions in the setting described above lies in network security games; see, e.g., \cite{bertsimas2016power,correa2017fare,dahan2021probability,holzmann2021shortest,szeszler2017security,tambe2011security} for various examples and applications of network security games.
In fact, such a game was also the motivation of \citet*{dahan2021probability}, who originally introduced the decomposition setting described above. We will discuss their game in detail in \cref{sec:game}. Here, we describe a simpler yet relevant problem as an illustrative~example.

Consider the following game played on a set system $(E, \sets)$, where
each element~$e \in E$ is equipped with a \emph{usage cost} $c_e \geq 0$ and an \emph{inspection cost}~\mbox{$d_e \geq 0$}. 
A \emph{defender} $D$ determines a random subset $S$ of elements from $E$ to inspect at cost $\sum_{e \in S} d_e$ (e.g., a set of links of a network at which passing traffic is monitored).
She anticipates that an \emph{attacker} $A$ is planning to carry out an illegal action, where $A$ chooses a set in $P \in \sets$ (e.g., a route in the network along which he smuggles contraband), for which he will receive utility~$U_1 - \sum_{e \in P} c_e$ for some constant $U_1 > 0$. However, if $P$ intersects with the random set $S$ of elements inspected by $D$, then $A$ is discovered while carrying out his illegal action, reducing his utility by a penalty $U_2 \geq U_1$.
The attacker also has the option to not carry out any attack, resulting in utility $0$. Thus, $A$ will refrain from using~$P \in \sets$ if the probability that $S \cap P \neq \emptyset$ exceeds $\pi_P := (U_1 - \sum_{e \in P} c_e) / U_2$.

A natural goal for $D$ is to discourage $A$ from attempting any attack at all,  while keeping the incurred inspection cost as small as possible.
Note that the randomized strategies that achieve this goal correspond exactly to vectors~$x$ that minimize $\sum_{S \subseteq E} \sum_{e \in S} d_e x_S$ subject to constraints~\eqref{eq:covering} to~\eqref{eq:nonneg}. 
Unfortunately, the corresponding LP has both an exponential number of variables and an exponential number of constraints in the size of the ground set $E$.

However, assume that we can establish the following three properties for our set system: 
\begin{itemize}
    \item[(1)] The system $(E, \sets, \pi)$ is \eqref{eq:cond}-sufficient.
    \item[(2)] We can efficiently compute feasible decompositions of given marginals.
    \item[(3)] We can efficiently solve $\min_{P \in \sets} \sum_{e \in P} \gamma_e$ for any given $\gamma \in \mathbb{R}_+^E$.
\end{itemize}
Then (1) allows us to formulate $D$'s problem in terms of marginals, i.e., $\min_{\rho \in [0,1]^E} \sum_{e \in E} d_e\rho_e$ subject to constraints \eqref{eq:cond}, (3) allows us to separate the linear constraints~\eqref{eq:cond} and obtain optimal marginals $\rho$, and (2) allows us to turn these marginals into a distribution corresponding to an optimal inspection strategy for~$D$.
In this article, we establish that the systems fulfilling condition~(1) are characterized by the so-called \emph{weak max-flow/min-cut property}, and in particular, we will show that all three conditions are fulfilled by a special class of systems with this property, called \emph{abstract networks}.

\subsection{The Weak Max-Flow/Min-Cut Property and Abstract Networks}
\label{sec:intro-mfmc}

A set system $(E, \sets)$ has the \emph{weak max-flow/min-cut (MFMC) property} if the polyhedron 
$$\textstyle Y_{\sets} := \left\{y \in \mathbb{R}_+^E \st \sum_{e \in P} y_e \geq 1 \ \forall\, P \in \sets\right\}$$ is integral.\footnote{%
The term \emph{weak max-flow/min-cut property} was introduced by \citet{seymour1977matroids} and distinguishes the property from the stronger \emph{max-flow/min-cut property}, which requires the system $\sum_{e \in P} y_e \geq 1$ for all $P \in \sets$ to be totally dual integral. 
A common assumption in the context of the weak MFMC property is that the  underlying set system is a \emph{clutter}, i.e., a set systems whose members are all inclusionwise minimal.
A clutter with the weak MFMC property is called \emph{ideal clutter}. It is easy to see that a set system $(E, \sets)$ has the weak MFMC property if and only if the induced clutter $\mathcal{C}(\sets) := \{P \in \sets \st Q \not\subset P \ \forall\, Q \in \sets\}$ is ideal.
However, all of our results hold for the general setting where $\sets$ is not necessarily a clutter.%
}
As the name suggests, the weak MFMC property indicates that, for any capacity vector $u \in \mathbb{R}^E$, the value of a maximum flow 
\begin{align}
    \textstyle \max_{f \in \mathbb{R}_+^{\sets}}\; \sum_{P \in \sets} f_P \quad \text{s.t. } \sum_{P : e \in P} f_P \leq u_e\ \forall\, e \in E
    \tag{MF} \label{eq:max-flow}
\end{align}
equals the capacity of a minimum cut
\begin{align}
    \textstyle \min_{C \subseteq E}\; \sum_{e \in C} u_e \quad \text{s.t. } P \cap C \neq \emptyset \ \forall\, P \in \sets.
    \tag{MC} \label{eq:min-cut}
\end{align}

Examples of systems with the weak MFMC property include the set of $s$-$t$-paths in a digraph, the set of arborescences in a digraph rooted at a fixed node, the set of $T$-joins in a graph with terminal set $T$, and the set of directed cuts in a digraph.
See the thesis of~\citet{abdi2018ideal} and the textbook by~\citet{cornuejols2001combinatorial} for an in-depth discussion, including all the aforementioned examples.

An \emph{abstract network} consists of a set system $(E, \sets)$ where each $P \in \sets$ (also referred to as \emph{abstract path}) is equipped with an internal linear order $\preceq_P$ of its elements, such that for all $P, Q \in \sets$ and all $e \in P \cap Q$ it holds that
\begin{align*}
    \exists R \in \sets \st R \,\subseteq\, \{p \in P \st p \preceq_P e \} \cup \{q \in Q \st e \preceq_Q q\}.
\end{align*}
Given two abstract paths $P, Q \in \sets$ and $e \in P \cap Q$, we use the notation $P \times_e Q$ to denote an arbitrary but fixed feasible choice for such an $R \in \sets$.

Intuitively, this definition is an abstraction of the property of digraphs that one can construct a new path by concatenating a prefix and a suffix of two intersecting paths.
Interesting special cases of abstract networks include the cases
where $\sets$ is the set of maximal chains in a partially ordered set $(E, \preceq)$,
and where $\sets$ is the set of simple $s$-$t$-paths in a digraph $D = (V, A)$, with $E = V \cup A$ and each path begin identified with the sequence of its nodes and arcs.
We remark that in both cases, the order $\preceq_{P \times_e Q}$ is consistent with $\preceq_P$ and $\preceq_Q$, which is not a general requirement for abstract networks; see, e.g.,~\cite{kappmeier2014abstract} for examples of abstract networks where this is not the case.  

Abstract networks were introduced by Hofmann~\cite{hoffman1974generalization} to illustrate the generality of Ford and Fulkerson's~\cite{ford1956maximal} original proof of the max-flow/min-cut theorem. 
He showed that abstract networks have the MFMC property by establishing total dual integrality for the dual of \eqref{eq:max-flow} when $(E, \sets)$ is an abstract network, even in a more general weighted version.

McCormick~\cite{mccormick1996polynomial} presented a combinatorial algorithm for solving problem~\eqref{eq:max-flow} when the abstract network is given by a \emph{membership oracle} that, given $F \subseteq E$, returns $P \in \sets$ with $P \subseteq F$ together with the corresponding order $\preceq_P$, or certifies that no such $P$ exists. Martens and McCormick~\cite{martens2008polynomial} later extended this result by giving a combinatorial algorithm for the aforementioned weighted version of the problem, using a stronger oracle.
Applications of abstract networks include, e.g., line planning for public transit systems~\cite{karbstein2013line} and route assignment in evacuation planning~\cite{pyakurel2022abstract,kappmeier2015generalizations}.

\subsection{Previous Results on Feasible Decompositions}

Dahan et al.~\cite{dahan2021probability} studied the decomposition setting described at the beginning of this article under the assumption that $\mathcal{P}$ is the set of maximal chains of a partially ordered set (poset), or, equivalently, the set of $s$-$t$-paths in a directed acyclic graph (DAG). 
They showed that \eqref{eq:cond} is sufficient for the existence of a feasible distribution when the requirements fulfil the conservation law
\begin{align}
    {\pi_P + \pi_Q = \pi_{P \times_e Q} + \pi_{Q \times_e P}} \quad \forall\, P, Q \in \mathcal{P}, e \in P \cap Q. \tag{C}\label{eq:conservation}
\end{align}
Although their result is algorithmic, the corresponding algorithm requires explicitly enumerating all maximal chains and hence does not run in polynomial time in the size of $E$.
However, \mbox{Dahan et al.~\cite{dahan2021probability}} provide a polynomial-time algorithm for the case of \emph{affine} requirements, in which there exists a vector $\mu \in [0, 1]^E$ such that the requirements are of the form
\begin{align}
    \textstyle \pi_P = 1 - \sum_{e \in P} \mu(e) \quad \forall\, P \in \sets, \tag{A}\label{eq:affine}
\end{align} 
and which constitutes a special case of \eqref{eq:conservation}.
As a consequence of this latter result, the authors were able to characterize Nash equilibria for their network security game (which is a flow-interdiction game played on $s$-$t$-paths in a digraph) by means of a compact arc-flow LP formulation and compute such equilibria in polynomial time, under the condition that the underlying digraph is acyclic.
Indeed, this positive result is particularly surprising, as similar---and seemingly simpler---flow-interdiction games had previously been shown to be NP-hard, even on DAGs~\cite{disser2020complexity}.

\subsection{Our Results}

We extend the results of Dahan et al.~\cite{dahan2021probability} for posets/DAGs in multiple directions:
\begin{enumerate}
    \item In \cref{sec:max-flow-min-cut}, we show that the weak MFMC property for $(E, \sets)$ is in fact equivalent to sufficiency of condition~\eqref{eq:cond} for all requirements $\pi$ fulfilling~\eqref{eq:affine}.
    Our proof is constructive and we show that a corresponding feasible decomposition can be computed in oracle-polynomial time when given access to a separation oracle for the polyhedron~$Y_{\sets}$. 
    \item In \cref{sec:feasible-decompositions}, we further provide a combinatorial, strongly polynomial algorithm for computing feasible decompositions when $(E, \sets)$ is an abstract network given by a membership oracle and $\pi$ fulfils~\eqref{eq:affine}. The algorithm is based on an explicit description of the decomposition via a natural generalization of shortest-path distances in abstract networks. 
    In particular, when $\sets$ is the set of $s$-$t$-paths in a digraph, a feasible decomposition can be computed efficiently by a single run of a standard shortest-path algorithm.
    In the case of generic abstract networks, our algorithm makes use of the following result as a subroutine.
    \item In \cref{sec:abstract-shortest-paths}, we provide a combinatorial, strongly polynomial algorithm for computing shortest paths in abstract networks, when $\sets$ is given by a membership oracle.
    Beyond its relevance for the present work,
    this algorithm is an interesting example of a routing algorithm that does not require any explicit information about local structures in the network, such as node-arc incidences.
    It also provides a necessary subroutine for a potential generalization of the Edmonds-Karp algorithm~\cite{edmonds1972theoretical} (which augments flow along shortest paths) to abstract networks, of which \citet{mccormick1996polynomial} conjectures that it might yield a strongly polynomial time algorithm for computing maximum flows in abstract networks.
    \item In \cref{sec:game} we discuss a consequence of our results: Nash equilibria for the network security game studied by \citet{dahan2021probability} can be described by a lower-dimensional LP and computed efficiently even when the game is played on a system with the weak MFMC property given via a separation oracle, or on an abstract network given via a membership oracle.
    In particular, this includes the case of a digraph with cycles. 
    \item In \cref{sec:conservation-law}, we show that the conservation law~\eqref{eq:conservation} proposed in \cite{dahan2021probability} for maximal chains in posets can be reduced to the affine requirements case~\eqref{eq:affine}. We provide a polynomial-time algorithm for computing the corresponding weights $\mu$ when the requirements $\pi$ are given by a value oracle. 
    As a consequence, the corresponding feasible decompositions can be computed efficiently in this case as well.%
    \footnote{Concurrently to the present work, \citet*{cela2023linear} derived an equivalent result in a different context, describing linearizable instances of higher-order shortest path problems; see \cref{sec:conservation-law} for details.}
    \item In \cref{sec:other-systems}, we turn our attention to systems that are not \eqref{eq:cond}-sufficient. We show that deciding whether a given marginal vector is feasible is NP-complete, even when $\sets$ is given by an explicit list of small sets and the requirements are all equal to~$1$.
    We also observe that approximately feasible decompositions always exist under condition \eqref{eq:cond}, i.e., decompositions in which each $P \in \sets$ is hit with a probability of at least $(1 - 1/\text{e}) \cdot \pi_P$.
\end{enumerate}

 \subsection{Notation}
Before we discuss our results in detail, we introduce some useful notation for abstract networks.
Let $(E, \sets)$ be an abstract network.
For $P \in \sets$ and $e \in P$, we use the following notation to denote prefixes of $P$ ending at $e$ and suffixes of $P$ starting at $e$, respectively:
\begin{alignat*}{3}
    [P, e] & \;:=\; \{p \in P \st p \preceq_P e \} & \qquad & [e, P] & \;:=\; \{p \in P \st e \preceq_P p \}\\
    (P, e) & \;:=\; \{p \in P \st p \prec_P e \} & \qquad & (e, P) & \;:=\; \{p \in P \st e \prec_P p \}
\end{alignat*}
For any path $P \in \sets$, we further denote the minimal and maximal element of $P$ with respect to $\preceq_P$ by $s_P$ and $t_P$, respectively.

\section{Feasible Decompositions and the Weak MFMC Property}
\label{sec:max-flow-min-cut}

In this section, we observe that $\eqref{eq:cond}$-sufficiency for arbitrary affine requirements can be characterized by the weak MFMC property.

\begin{theorem}\label{thm:mfmc}
Let $E$ be a finite set and $\sets \subseteq 2^E$. Then the following two statements are equivalent:
\begin{enumerate}
    \item $(E, \sets)$ has the weak MFMC property.\label{s:mfmc}
    \item $(E, \sets, \pi)$ is \eqref{eq:cond}-sufficient for all $\pi$ fulfilling \eqref{eq:affine}.\label{s:cond-sufficient}
\end{enumerate}
\end{theorem}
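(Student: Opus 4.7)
The plan is to establish the equivalence by proving each direction separately. For \ref{s:mfmc}${}\Rightarrow{}$\ref{s:cond-sufficient}, I would construct a feasible decomposition explicitly, using the weak MFMC property to express $\rho + \mu$ as a convex combination of covers and then producing the distribution by a two-stage sampling process. For \ref{s:cond-sufficient}${}\Rightarrow{}$\ref{s:mfmc}, I would argue by contrapositive, using a non-integral vertex of $Y_\sets$ itself as an explicit obstruction.

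For the forward direction, fix $\rho, \mu \in [0,1]^E$ such that $\pi_P := 1 - \sum_{e \in P} \mu_e$ satisfies \eqref{eq:cond}, i.e., $y := \rho + \mu \in Y_\sets$. I would first normalize by replacing $\mu_e$ with $\min(\mu_e, 1-\rho_e)$: this only strengthens $\pi$, preserves \eqref{eq:cond} (any $P$ containing a newly capped coordinate now has $y_e = 1$, so the sum is automatically at least $1$), and yields $y \in Y_\sets \cap [0,1]^E$. A standard consequence of weak MFMC is that $Y_\sets \cap [0,1]^E$ is the convex hull of $0/1$ covers: indeed, any vertex of $Y_\sets$ with some coordinate $v_e > 1$ would satisfy no active constraint involving $e$ (the box $v_e \geq 0$ is not tight, and any tight $\sum_{e' \in P} v_{e'} = 1$ with $e \in P$ forces $v_e \leq 1$), contradicting the vertex rank condition, so all vertices of $Y_\sets$ are in $[0,1]^E$ and hence $0/1$. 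Write $y = \sum_j \lambda_j \chi_{C_j}$ with $\lambda_j \geq 0$, $\sum_j \lambda_j = 1$, and each $C_j \subseteq E$ a cover. Define the distribution $x$ by the two-stage procedure: draw $j$ with probability $\lambda_j$, then independently include each $e \in C_j$ in $S$ with probability $\rho_e / y_e$ (with the convention $0/0 = 0$). The marginal comes out to $\prob{e \in S} = \sum_{j \st e \in C_j} \lambda_j \cdot \rho_e/y_e = \rho_e$. For the covering bound, each $C_j$ meets $P$, so using the elementary inequality $\prod_{e \in I} a_e \leq \sum_{e \in I} a_e$ for nonempty $I$ and $a_e \in [0,1]$ (which follows from $\prod \leq \min \leq \sum$):
\[
\prob{S \cap P = \emptyset} \;=\; \sum_j \lambda_j \prod_{e \in C_j \cap P} \frac{\mu_e}{y_e} \;\leq\; \sum_{e \in P} \frac{\mu_e}{y_e} \sum_{j \st e \in C_j} \lambda_j \;=\; \sum_{e \in P} \mu_e,
\]
hence $\prob{S \cap P \neq \emptyset} \geq \pi_P$ as required.

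For the converse, suppose $Y_\sets$ is not integral and fix a non-integral vertex $y^*$, which lies in $[0,1]^E$ by the argument above. Setting $\rho := y^*$ and $\mu := 0$ yields $\pi \equiv 1$, an affine requirement, and \eqref{eq:cond} is satisfied because $y^* \in Y_\sets$. A feasible decomposition $x$ would force $\prob{S \cap P \neq \emptyset} = 1$ for every $P$, making every $S$ in its support a cover; then $y^* = \sum_S x_S \chi_S$ would lie in the convex hull of $0/1$ covers, contradicting the fact that $y^*$ is a vertex of $Y_\sets$ whose coordinates are not all integral. The main obstacle is the forward construction: picking the right two-stage sampler and recognizing that the $\prod \leq \sum$ bound is exactly what converts the combinatorial decomposition guaranteed by MFMC into the affine covering probability needed on the right-hand side.
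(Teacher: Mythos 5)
Your overall architecture is essentially the paper's: in the forward direction you decompose $y=\rho+\mu$ over the integral extreme points of $Y_{\sets}$ and then run a randomized second stage that thins each sampled cover down to the marginals $\rho$, and your converse (take a fractional vertex $y^*$, set $\mu=0$ so $\pi\equiv 1$, note every support set of a feasible decomposition must be a cover, and conclude $y^*$ is a convex combination of integral points of $Y_{\sets}$, contradicting extremality) is exactly the paper's argument. The differences are in the execution of the forward direction: you truncate $\mu_e$ to $\min\{\mu_e,1-\rho_e\}$ so that $y\in[0,1]^E$ and include each element of the chosen cover independently with probability $\rho_e/y_e$, bounding the miss probability via $\prod\le\sum$; the paper instead keeps the ray term from the representation $y=\sum_{S}\lambda_S\mathbbm{1}_S+r$, uses a single uniform threshold in the second stage with a union bound, and then needs \cref{lem:feasible-min} because its construction only attains marginals $\rho'\le\rho$. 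Your variant, when the representation without a ray is available, hits the marginals exactly and dispenses with \cref{lem:feasible-min}, which is a genuine simplification (the paper's single-threshold coupling additionally gives a polynomial-size support, which matters for \cref{thm:mfmc-computation} but not for the theorem itself).

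The one step that needs justification is the claim that $Y_{\sets}\cap[0,1]^E$ equals the convex hull of the $0/1$ covers. Your argument only shows that the \emph{vertices of $Y_{\sets}$} are $0/1$; since $Y_{\sets}$ is unbounded with recession cone $\mathbb{R}_+^E$, a point $y\in Y_{\sets}\cap[0,1]^E$ is a priori only of the form $\sum_j\lambda_j\mathbbm{1}_{C_j}+r$ with $r\ge 0$, and integrality of a polyhedron is not in general preserved by intersecting with a box, so writing $y=\sum_j\lambda_j\mathbbm{1}_{C_j}$ without the residual is a real logical jump. The claim is true, and a short fix uses exactly the bound $y\le 1$ that your truncation provides: for each $e$ with $r_e>0$, the total weight of covers $C_j$ with $e\notin C_j$ is $1-(y_e-r_e)\ge r_e$, so one can split off sub-weights totalling $r_e$ among such covers and add $e$ to them; enlarging a cover keeps it a cover and leaves all other coordinates of the combination unchanged, so processing the elements one at a time absorbs $r$ completely. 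Alternatively, keep $r$ explicitly: then $\prob{e\in S}=(y_e-r_e)\rho_e/y_e\le\rho_e$, your $\prod\le\sum$ estimate still gives $\prob{S\cap P=\emptyset}\le\sum_{e\in P}(\mu_e/y_e)(y_e-r_e)\le\sum_{e\in P}\mu_e$, and you finish by lifting the marginals with \cref{lem:feasible-min}, exactly as the paper does. With either repair your proof is complete.
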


In the proof of \cref{thm:mfmc}, we describe how to construct a feasible decomposition for marginals satisfying~\eqref{eq:cond} from a convex combination of extreme points of the polyhedron $Y_{\sets}$. 
This convex combination can be obtained algorithmically if we have access to a \emph{separation oracle} for $Y_{\sets}$, that given $y \in \mathbb{R}^E$, either asserts that $y \in Y_{\sets}$ or finds a hyperplane separating $y$ from $Y_{\sets}$. 
This is formalized in the following theorem.

\begin{theorem}
\label{thm:mfmc-computation}
    There is an algorithm that, given a set system~$(E, \sets)$ with the weak MFMC property, accessed via a separation oracle for $Y_{\sets}$, and vectors \mbox{$\rho_e, \mu_e \in [0, 1]^E$} such that $\rho$ fulfils \eqref{eq:cond} for $\pi_P := 1 - \sum_{e \in P} \mu_e$, returns a feasible decomposition of $\rho$ with respect to $(E, \sets, \pi)$ in time polynomial in $T_Y$, $|E|$, and the encoding size of $\mu$ and $\rho$, where $T_Y$ denotes the time for a call to the separation oracle.
\end{theorem}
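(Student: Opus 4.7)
The plan is to lift the constructive argument of \cref{thm:mfmc} to an efficient algorithm by invoking the ellipsoid method against the separation oracle for $Y_{\sets}$. The proof of \cref{thm:mfmc} (direction (1) $\Rightarrow$ (2)) takes the marginals $\rho$ together with the affine weights $\mu$ and constructs a point $y^\star$ depending in a simple way on these inputs (the natural candidate being $y^\star = \rho + \mu$, whose membership in $Y_{\sets}$ is exactly condition~\eqref{eq:cond} after rewriting $1-\pi_P = \sum_{e \in P} \mu_e$). It then exploits the weak MFMC property by writing $y^\star$ as a convex combination of integral extreme points of $Y_{\sets}$, i.e., $0/1$-indicator vectors of covers of $\sets$, and converts this combination into a feasible decomposition $x$ of $\rho$ via an explicit rule spelled out in that proof.

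Thus the task reduces to computing such a convex decomposition algorithmically from a separation oracle alone. This is a standard consequence of the Gr\"otschel--Lov\'asz--Schrijver theory: using the ellipsoid method, one can, in time polynomial in $|E|$, the encoding length of $y^\star$, and the oracle running time $T_Y$, produce affinely independent extreme points $v^1,\ldots,v^k$ and extreme rays $r^1,\ldots,r^\ell$ of $Y_{\sets}$ together with coefficients $\lambda_i,\gamma_j \geq 0$ satisfying $\sum_i \lambda_i = 1$ and $y^\star = \sum_i \lambda_i v^i + \sum_j \gamma_j r^j$. Concretely, one repeatedly optimizes a linear objective over $Y_{\sets}$ by the ellipsoid method, extracts a basic optimal solution as a new vertex (or an extreme ray, if the LP is unbounded), and then peels off these components by restricting the next round to a strictly lower-dimensional face containing the remaining residual. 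Because $(E,\sets)$ has the weak MFMC property, each $v^i$ is the $0/1$-indicator of a cover $C_i$ of $\sets$; and because $Y_{\sets}$ is defined by $y \geq 0$ together with the covering constraints, its extreme rays are the unit vectors $e_e$ for $e \in E$, so $\sum_j \gamma_j r^j$ is merely a coordinatewise nonnegative residual. Feeding the resulting cover distribution into the explicit construction from \cref{thm:mfmc} then yields $x$, supported on polynomially many sets and with polynomially representable probabilities.

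The main obstacle is the usual bookkeeping required to extract exact vertices (not just approximate membership certificates) from the oracle and to keep the encoding lengths of the coefficients $\lambda_i$ and $\gamma_j$ polynomially bounded throughout. Both issues are resolved by the standard techniques for integral polyhedra given by separation oracles: the integrality of $Y_{\sets}$ guarantees that every vertex encountered has $0/1$ entries, hence polynomial encoding length, and this in turn propagates to polynomial encoding lengths of the convex-combination coefficients via Cramer's rule on the linear systems identifying $y^\star$ in the affine hull of the computed vertices and rays. A secondary subtlety is ensuring termination of the peeling phase in polynomially many iterations, which follows because each iteration strictly reduces the dimension of the affine hull of the remaining residual.
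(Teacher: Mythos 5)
Your proposal is correct and follows essentially the same route as the paper: form $y=\rho+\mu\in Y_{\sets}$, use the separation oracle to express it as a convex combination of (integral, by the weak MFMC property) extreme points of $Y_{\sets}$ plus a nonnegative residual, and feed this into the construction from the proof of \cref{thm:mfmc}. The only difference is presentational: where you spell out the ellipsoid-based peeling argument, the paper simply invokes the algorithmic Carath\'eodory theorem of Gr\"otschel--Lov\'asz--Schrijver to get at most $|E|+1$ sets with $\lambda_S>0$, and then notes that the thresholding set $T_2$ and the intersection $T=T_1\cap T_2$ have polynomial-size supports before topping up the marginals via \cref{lem:feasible-min}.
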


\paragraph{Separation for $Y_{\sets}$ equals Minimization for $(E, \sets)$.}
Note that the separation oracle for $Y_{\sets} = \left\{y \in \mathbb{R}_+^E \st \sum_{e \in P} y_e \geq 1 \ \forall\, P \in \sets\right\}$ is equivalent to an algorithm that, given $c \in \mathbb{R}_+^{E}$, finds $P \in \sets$ minimizing $\sum_{e \in P} c_e$, i.e., the separation problem for $Y_{\sets}$ can be solved efficiently if and only if one can efficiently minimize nonnegative linear objectives over $(E, \sets)$.

\subsection{Proof of \cref{thm:mfmc}}

Before we proove the theorem, we make the following helpful observation.

\begin{lemma}\label[lemma]{lem:feasible-min}
    Let $(E, \sets)$ be a set system and $\pi \in [0, 1]^{\sets}$.
    Let $\rho, \rho' \in [0, 1]^E$ with $\rho'_e \leq \rho_e$ for all $e \in E$. 
    Then $\rho$ is feasible if $\rho'$ is feasible.
    Moreover, given a feasible decomposition $x'$ for $\rho'$, a feasible decomposition for $\rho$ can be computed in time polynomial in $|E|$ and $|\{S \subseteq E \st x'_S > 0\}|$.
\end{lemma}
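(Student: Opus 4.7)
The plan is to construct a feasible decomposition $x$ for $\rho$ by ``growing'' the sets in the support of $x'$, exploiting the elementwise slack $\rho - \rho' \geq 0$. The guiding intuition is that enlarging any sampled set $S$ can only increase $\prob{P \cap S \neq \emptyset}$, so the covering constraints~\eqref{eq:covering} are preserved for free; all the work lies in tuning the marginals up from $\rho'$ to $\rho$ without blowing up the size of the support.

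Fix any ordering $e_1, \ldots, e_n$ of $E$ and initialize $x := x'$. In iteration $i$, I will raise the marginal of $e_i$ from $\rho'_{e_i}$ to $\rho_{e_i}$ by transferring a total mass of $\delta_i := \rho_{e_i} - \rho'_{e_i}$ from sets $S$ in the current support with $e_i \notin S$ to the enlarged sets $S \cup \{e_i\}$. This is done in a single pass with budget $b$ initialized to $\delta_i$: iterate through such $S$ in any order, and either transfer the whole mass $x_S$ and decrement $b$ (if $x_S \leq b$) or transfer only the residual $b$ from $S$ to $S \cup \{e_i\}$ and stop (if $x_S > b$). The total available mass $\sum_{S \st e_i \notin S} x_S$ equals $1 - \rho'_{e_i} \geq \delta_i$, so the budget is exhausted exactly.

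Correctness hinges on two invariants. First, iteration $i$ only toggles membership of $e_i$, so it leaves the marginals of every other element intact; in particular, targets already attained in iterations $1, \ldots, i-1$ are preserved, and the as-yet-untouched marginals still agree with $\rho'$. Second, every $S^*$ in the current support arose from some $S \in \{T \subseteq E \st x'_T > 0\}$ by adding elements, so $S \subseteq S^*$ and therefore $\sum_{S^* \st S^* \cap P \neq \emptyset} x_{S^*} \geq \sum_{S \st S \cap P \neq \emptyset} x'_S \geq \pi_P$ for every $P \in \sets$ at every stage. After iteration $n$, the marginals match $\rho$ exactly and $x$ is the desired feasible decomposition.

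The main point to verify---and the real reason the single-pass discipline matters---is that the support stays small. A ``full transfer'' replaces $S$ by $S \cup \{e_i\}$, changing the support size by at most $0$; a ``partial transfer,'' which can create $S \cup \{e_i\}$ as a new support element while leaving $S$ in place, occurs at most once per iteration. Hence $|\{S \st x_S > 0\}|$ grows by at most one in each of the $n$ iterations, so the final support has size at most $|\{S \st x'_S > 0\}| + |E|$, and the total work is polynomial in $|E|$ and $|\{S \st x'_S > 0\}|$, as required.
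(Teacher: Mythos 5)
Your construction is correct, but it takes a genuinely different route from the paper. The paper argues probabilistically: it samples $T'$ according to $x'$, independently draws a single uniform threshold $\tau$ and forms the ``top-up'' set $T'' := \{e \st \rho_e > \rho'_e,\ \tau \leq (\rho_e-\rho'_e)/(1-\rho'_e)\}$, and outputs the distribution of $T := T' \cup T''$; consistency follows from $\prob{e\in T} = \rho'_e + (1-\rho'_e)\tfrac{\rho_e-\rho'_e}{1-\rho'_e} = \rho_e$, covering from $T \supseteq T'$, and the single shared threshold keeps the support of $T''$ to at most $|E|$ sets, giving a support bound of $|E|\cdot|\{S \st x'_S>0\}|$. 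You instead run a deterministic greedy mass-transfer scheme, raising one marginal at a time by shifting exactly $\rho_{e_i}-\rho'_{e_i}$ units of mass from sets omitting $e_i$ to their supersets $S\cup\{e_i\}$; your budget check ($1-\rho'_{e_i}\geq \delta_i$ since the marginal of $e_i$ is untouched before iteration $i$ and $\rho_{e_i}\leq 1$), the preservation of all other marginals, and the support accounting (at most one partial split per iteration, so at most $|\{S\st x'_S>0\}|+|E|$ sets) are all sound, and in fact your support bound is tighter than the paper's. One small point of rigor: the covering invariant is cleanest if justified by mass monotonicity---each elementary transfer moves mass from a set $S$ to a superset, so for every $P\in\sets$ the quantity $\sum_{S \st S\cap P\neq\emptyset} x_S$ never decreases---rather than by the statement that every support set contains some original support set, which by itself does not quite yield the inequality; the intended argument is clearly the former and it is correct. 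What the paper's coupling buys is a two-line feasibility proof with no iteration bookkeeping; what your approach buys is an explicitly combinatorial, randomness-free procedure with a smaller support.
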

\begin{proof}
    Let $x'$ be a feasible decomposition of $\rho'$.
    Let $T' \subseteq E$ be a random set with $\prob{T' = S} = x'_S$ for all $S \subseteq E$.
    Furthermore let 
    \begin{align*}
        T'' := \left\{e \in E \st \rho_e > \rho'_e, \ \tau \leq \tfrac{\rho_e - \rho'_e}{1 - \rho'_e} \right\},
    \end{align*}
    where $\tau \sim U[0, 1]$ is drawn uniformly at random and independently from $T'$.
    Let $T := T' \cup T''$. Note that 
    \begin{align*}
        \prob{e \in T} = \prob{e \in T'} + \prob{e \notin T'}\prob{e \in T''} = \rho_e
    \end{align*}
    by construction of $T''$ and the fact that $\prob{e \in T'} = \rho'_e \leq \rho_e$.
    Note further that $\prob{P \cap T \neq \emptyset} \geq \prob{P \cap T' \neq \emptyset} \geq \pi_P$ for all $P \in \sets$, 
    because $x'$ is a feasible decomposition of $\rho'$.
    We conclude that $x_S := \prob{T = S}$ defines a feasible decomposition of $\rho$. 
    
    Note that there are at most $|E|$ different sets in the support of the distribution of $T''$, corresponding to the different threshold values for $\tau$ at which element $e \in E$ is excluded from $T''$.
    Therefore the support of the distribution of $T$ contains at most $|E| \cdot |\{S \subseteq E \st x'_S > 0\}|$ different sets and the above construction can be carried out in polynomial time. \qed
\end{proof}

\begin{proof}[\cref{thm:mfmc}]
    Let $\mathcal{S} := \{S \subseteq E \st P \cap S \neq \emptyset \ \forall, P \in \sets\}$ and denote by~$\mathbbm{1}_S \in \{0, 1\}^E$ the incidence vector of $S \subseteq E$.
    The following observations on the structure of $Y_{\sets}$ are useful for our proof.

    \begin{myclaim}
        Let $y$ be an extreme point of $Y_{\sets}$. Then $y_e \leq 1$ for all $e \in E$.
        Moreover, $Y_{\sets} \cap \{0, 1\}^E = \{\mathbbm{1}_S \in Y_{\sets} \st S \in \mathcal{S}\}$.
    \end{myclaim}
    \begin{proof}
        Let $y \in Y_{\sets}$ with $y_e > 1$ for some $e \in E$.
        Then $y + \varepsilon \mathbbm{1}_{\{e\}}, y - \varepsilon \mathbbm{1}_{\{e\}} \in Y_{\sets}$ for some $\varepsilon > 0$ and hence $y$ is not an extreme point.
        The second part of the claim follows directly from definition of $Y_{\sets}$ and $\mathcal{S}$.
        \quad $\blacklozenge$
    \end{proof} 

    We first show that Statement~\ref{s:mfmc} implies Statement~\ref{s:cond-sufficient}.
    Thus, let $\mu \in [0, 1]^E$ and define $\pi_P := 1 - \sum_{e \in P} \mu_e$ for $P \in \sets$.
    Let~$\rho \in [0, 1]^E$ fulfilling \eqref{eq:cond} for $(E, \sets, \pi)$.
    We will show that $\rho$ is feasible, using integrality of~$Y_{\sets}$.
    
    Define $y := \rho + \mu$. Note hat $y \in Y_{\sets}$ because $\sum_{e \in P} \rho_e + \mu_e \geq 1$ for all $P \in \sets$ by~\eqref{eq:cond}.
    By the claim above, integrality of $Y_{\sets}$ implies that every extreme point of $Y_{\sets}$ is of the form $\mathbbm{1}_S$ for some $S \in \mathcal{S}$.
    Furthermore, the extreme rays of $Y_{\sets}$ have only nonnegative components.
    Hence, we can write
    $y = \sum_{S \in \mathcal{S}} \lambda_S \mathbbm{1}_S + r$ for some $\lambda \in [0, 1]^{\mathcal{S}}$ with $\sum_{S \in \mathcal{S}} \lambda_S = 1$ and some $r \in \mathbb{R}_+^E$.
    
    We now define two random subsets of $E$ that will lead to a feasible decomposition of $\rho$.
    The first random set $T_1$ is defined by 
    $$\prob{T_1 = S} = \lambda_S \text{ for } S \in \mathcal{S}.$$
    Note that 
    $\prob{e \in T_1} = y_e - r_e \geq 0$ for all $e \in E$
    and $\prob{P \cap T_1} = 1$ for all $P \in \sets$ by construction of $T_1$.
    
    The second random set $T_2$ is independent from $T_1$ and defined by 
    $$T_2 := \left\{e \in E \st y_e > r_e, \ \tau < \tfrac{\rho_e}{y_e - r_e}\right\}$$ where $\tau \sim U[0, 1]$ is drawn uniformly at random from $[0, 1]$.
    Note that, by construction,
    $\prob{e \in T_2} = \min \left\{\rho_e / (y_e - r_e),\, 1\right\}$ for all $e \in E$ with $y_e - r_e > 0$.
    
    Now let $T := T_1 \cap T_2$. Note that also $T$ is a random set and that
    \begin{align*}
        \prob{e \in T} \;=\; \prob{e \in T_1} \prob{e \in T_2} \;=\; \min \{\rho_e,\, y_e - r_e\} \;\leq\; \rho_e
    \end{align*} 
    for all $e \in E$ by the above observations on the marginal probabilities for $T_1$ and $T_2$. Moreover, note that
    \begin{align*}
        \prob{P \cap T = \emptyset} & \;\leq\; \prob{\exists e \in P \cap T_1 : e \notin T_2}\\
        & \;\leq\; \sum_{e \in P} \prob{e \in T_1 \wedge e \notin T_2}\\
        & \;\leq\; \sum_{e \in P} (y_e - r_e) \left( 1 - \min \left\{\tfrac{\rho_e}{y_e - r_e}, 1\right\} \right)\\
        & \;\leq\; \sum_{e \in P} \max \{0,\; y_e - r_e - \rho_e\}  \;\leq\; \sum_{e \in P} \mu_e
    \end{align*}
     for all $P \in \sets$, where the first inequality follows from $\prob{P \cap T_1 \neq \emptyset} = 1$, the second inequality follows by union bound, the third inequality follows from independence of $T_1$ and $T_2$ and the observations on their marginal probabilities, the fourth inequality is a simple transformation using $y_e - r_e \geq 0$, and the fifth and final inequality follows from $\mu_e, r_e \geq 0$ and $y_e = \mu_e + \rho_e$. 
     
     We conclude that
        $\prob{P \cap T \neq \emptyset} \;\geq\; 1 - \sum_{e \in P} \mu_e \;=\; \pi_P$
    for all $P \in \sets$.
    Therefore, $T$ defines a feasible decomposition of the marginals $\rho'$ defined by $\rho'_e := \prob{e \in T} \leq \rho_e$, implying that $\rho$ is feasible by \cref{lem:feasible-min}. Thus $(E, \sets, \pi)$ is \eqref{eq:cond}-sufficient.
    
    We now show that Statement~\ref{s:cond-sufficient} implies Statement~\ref{s:mfmc}.
    Let $y$ be an extreme point of $Y_{\sets}$. 
    Note that $y_e \in [0, 1]^E$ by the claim at the beginning of the proof and that $y$ fulfils \eqref{eq:cond} for $\pi \equiv 1$. 
    Thus, $y$ has a feasible decomposition $x$ with respect to $(E, \sets, \pi)$ by Statement~\ref{s:cond-sufficient}. 
    Note that for any $S \subseteq E$ with $x_S > 0$ we must have $S \in \mathcal{S}$, because $1 = \sum_{S \subseteq E} x_S \geq \sum_{S : S \cap P \neq \emptyset} x_S \geq 1$ for all $P \in \sets$.
    Hence $y = \sum_{S \in \mathcal{S}} x_S \mathbbm{1}_S$, i.e., $y$ is a convex combination of the points $\mathbbm{1}_S$ for $S \in \mathcal{S}$. 
    Because $y$ is an extreme point of $Y$ and because $\mathbbm{1}_S \in Y_{\sets}$ for all $S \in \mathcal{S}$, we conclude that $y = \mathbbm{1}_S$ for some $S \in \mathcal{S}$ and hence $y$ is integral. \qed
\end{proof}

\subsection{Proof of \cref{thm:mfmc-computation}}

We now prove \cref{thm:mfmc-computation} by showing that the construction described in the preceding proof can be carried out efficiently using a separation oracle for $Y_{\sets}$.

\begin{proof}[\cref{thm:mfmc-computation}]
In order to proof the theorem, it suffices to show that an explicit description of the distributions of the sets $T_1$ and $T_2$ can be determined efficiently (in particular, we will see that the support of both distributions is of polynomial size).
From this we can compute the distribution of $T$ and then a feasible decomposition of $\rho$ by applying \cref{lem:feasible-min}.

Recall that $T_1$ is defined by $\prob{T_1 = S} = \lambda_{S}$ for $S \in \mathcal{S}$, where $\lambda \in [0, 1]^{\mathcal{S}}$ with $\sum_{S \in \mathcal{S}} \lambda_S = 1$ is part of a decomposition of the vector \mbox{$y = \mu + \rho \in Y_{\sets}$} of the form $y = \sum_{S \in \mathcal{S}} \lambda_S \mathbbm{1}_S + r$ for some $r \in \mathbb{R}^E_+$.
Given access to a separation oracle for $Y_{\sets}$, this decomposition of $y$ can be computed efficiently using the algorithmic version of Carath\'{e}odory's theorem~\citep[Corollary~6.5.13]{grotschel2012geometric}, with $\lambda_S > 0$ for at most $|E| + 1$ sets $S \in \mathcal{S}$.

Moreover, the distribution of the random set $T_2$ contains at most $|E|$ different sets in its support, corresponding to the threshold different values for $\tau$ at which element $e \in E$ is excluded from $T_2$.
We conclude that the distribution for $T$ has a support of size at most $|E| \cdot (|E| + 1)$ and that the support and corresponding probabilities can be computed in polynomial time.
\qed
\end{proof}

\section{Feasible Decompositions in Abstract Networks}
\label{sec:feasible-decompositions}

In this section we describe a combinatorial algorithm for computing feasible decompositions of marginals fulfilling \eqref{eq:cond} in abstract networks $(E, \sets)$ with affine requirements $\pi$. 
The algorithm is based on an explicit description of a feasible decomposition using shortest-path distances in abstract networks.
We present this description in \cref{sec:feasible-decompositions-description} and also show that for the special case that $\sets$ is the set of $s$-$t$-paths in a digraph, the decomposition can be computed using a single run of a standard shortest-path algorithm.
For the general case of arbitrary abstract networks, we then present a generic algorithm and its analysis in \cref{sec:computation}.

\subsection{Description of Feasible Decompositions using Shortest-Path Distances}
\label{sec:feasible-decompositions-description}

We will prove the following theorem, which describes a way to obtain feasible decompositions of marginals in abstract networks.

\begin{theorem}\label{thm:feasible-decomposition}
    Let $(E, \sets)$ be an abstract network and let $\rho, \mu \in [0, 1]^E$ fulfilling condition \eqref{eq:cond}, i.e., $\sum_{e \in P} \rho_e \geq \pi_P := 1 - \sum_{e \in P} \mu_e$ for all $P \in \sets$. Define
    \begin{align*}
        \textstyle \alpha'_e := \min \big\{\sum_{f \in (Q, e)} \mu_f + \rho_f \st Q \in \sets,\, e \in Q\big\} \ \text{and} \ \alpha_e := \min\, \{\alpha'_e,\, 1 - \rho_e\}
    \end{align*}
    for $e \in E$.
    For $\tau \sim U[0, 1]$ drawn uniformly at random from $[0, 1]$, let 
    \begin{align*}
        S_{\tau} := \{e \in E \st \alpha_e \leq \tau < \alpha_e + \rho_e\}.
    \end{align*}
    Then $x$ defined by $x_S := \prob{S_\tau = S}$ for $S \subseteq E$ is a feasible decomposition of~$\rho$ for $(E, \sets, \pi)$.
\end{theorem}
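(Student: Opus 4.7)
The plan is to verify the two defining properties of a feasible decomposition: marginal consistency $\prob{e \in S_\tau} = \rho_e$ and coverage $\prob{P \cap S_\tau \neq \emptyset} \geq \pi_P$. The marginal property is the easy part: by construction $\alpha'_e \geq 0$ (as the minimum of sums of non-negative terms) and $\alpha_e \leq 1 - \rho_e$, so $[\alpha_e, \alpha_e + \rho_e) \subseteq [0, 1)$ has Lebesgue measure $\rho_e$; since $\tau \sim U[0, 1]$, this directly gives $\prob{e \in S_\tau} = \rho_e$.

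For coverage, I would fix $P \in \sets$ and reduce to the following \emph{extended coverage lemma}: $[0, 1) \subseteq \bigcup_{e \in P}[\alpha_e,\, \alpha_e + \rho_e + \mu_e)$. Granted this, the portion of $[0, 1)$ missed by the original intervals $[\alpha_e, \alpha_e + \rho_e)$ lies in $\bigcup_{e \in P}[\alpha_e + \rho_e,\, \alpha_e + \rho_e + \mu_e)$ and hence has measure at most $\sum_{e \in P} \mu_e$, so $\prob{P \cap S_\tau \neq \emptyset} \geq 1 - \sum_{e \in P}\mu_e = \pi_P$.

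To prove extended coverage, fix $\tau \in [0, 1)$. Since $(P, s_P) = \emptyset$, $\alpha'_{s_P} = 0$ and $\alpha_{s_P} = 0 \leq \tau$, so $\{e \in P : \alpha_e \leq \tau\}$ admits a $\preceq_P$-maximal element $e^*$, and the target becomes $\tau < \alpha_{e^*} + \rho_{e^*} + \mu_{e^*}$. In the case $e^* = t_P$: if $\alpha_{t_P}$ attains the cap $1 - \rho_{t_P}$ the claim is immediate; otherwise $\alpha_{t_P} = \alpha'_{t_P}$, and for a minimizer $Q^* \in \sets$ the abstract network property yields $R := Q^* \times_{t_P} P \in \sets$ with $R \subseteq [Q^*, t_P] \cup [t_P, P] = [Q^*, t_P]$ (using $[t_P, P] = \{t_P\}$), so condition~\eqref{eq:cond} applied to $R$ gives
\begin{equation*}
    1 \;\leq\; \sum_{f \in R}(\mu_f + \rho_f) \;\leq\; \sum_{f \in [Q^*, t_P]}(\mu_f + \rho_f) \;=\; \alpha'_{t_P} + \rho_{t_P} + \mu_{t_P}.
\end{equation*}
In the case $e^* \neq t_P$: letting $e^{**}$ denote the $\preceq_P$-successor of $e^*$, maximality of $e^*$ forces $\alpha_{e^{**}} > \tau$, and it suffices to establish the \emph{chaining inequality} $\alpha_{e^{**}} \leq \alpha_{e^*} + \rho_{e^*} + \mu_{e^*}$, which then yields $\tau < \alpha_{e^{**}} \leq \alpha_{e^*} + \rho_{e^*} + \mu_{e^*}$.

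The main obstacle is this chaining inequality which, when $\alpha_{e^*} = \alpha'_{e^*}$, reduces to the triangle-type bound $\alpha'_{e^{**}} \leq \alpha'_{e^*} + \mu_{e^*} + \rho_{e^*}$ between two consecutive elements of an abstract path. Applying the abstract network property to a minimizer $Q^* \ni e^*$ of $\alpha'_{e^*}$ and to $P$ produces $R = Q^* \times_{e^*} P \subseteq [Q^*, e^*] \cup [e^*, P]$; however, in contrast to the $e^* = t_P$ case the suffix $[e^*, P]$ is no longer a singleton, $R$ is not guaranteed to contain $e^{**}$, and the order $\preceq_R$ need not be consistent with $\preceq_{Q^*}$ and $\preceq_P$, so the one-shot argument used before breaks down. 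I expect the chaining inequality to follow instead from \emph{iterated} applications of the abstract network property, mirroring the combinatorial manipulations that drive the shortest-path theory for abstract networks developed in \cref{sec:abstract-shortest-paths}—the very subroutine that the subsequent subsection invokes in order to actually compute the values $\alpha'_e$.
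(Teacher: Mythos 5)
Your overall architecture is sound and, in substance, parallels the paper's: the marginal computation, the reduction of coverage to the statement that $[0,1)$ is covered by the enlarged intervals $[\alpha_e,\alpha_e+\rho_e+\mu_e)$ along $P$, and the $e^*=t_P$ case (which is exactly Statement~1 of \cref{lem:successor}, proved the same way) are all correct; the paper's proof via the potential $\phi$ and the set $F$ is essentially an equivalent accounting of this covering. The problem is that the entire argument hinges on the "chaining inequality", which you leave unproven, and the version you target is too strong: for the \emph{immediate} $\preceq_P$-successor $e^{**}$ of $e^*$, the bound $\alpha_{e^{**}}\leq \alpha_{e^*}+\mu_{e^*}+\rho_{e^*}$ can simply fail in an abstract network. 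For instance, if the only abstract path containing $e^{**}$ has an expensive prefix, while a cheap path $Q$ reaches $e^*$ and the crossing $Q\times_{e^*}P$ is a path that skips $e^{**}$ and goes directly to $t$, then $\alpha'_{e^{**}}$ can exceed $\alpha'_{e^*}+\mu_{e^*}+\rho_{e^*}$ by an arbitrary amount; no amount of iterating the crossing axiom will recover the successor version, so deferring to "iterated applications" or to the shortest-path machinery of \cref{sec:abstract-shortest-paths} does not close the gap.

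The fix is that your own maximality argument never needs the immediate successor: since $e^*$ is $\preceq_P$-maximal with $\alpha_{e^*}\leq\tau$, \emph{every} $e'\in(e^*,P)$ satisfies $\alpha_{e'}>\tau$, so it suffices to exhibit \emph{some} $e'\in(e^*,P)$ with $\alpha_{e'}\leq\alpha_{e^*}+\mu_{e^*}+\rho_{e^*}$. This weaker, existential statement is exactly Statement~2 of \cref{lem:successor}, and it does admit a one-shot proof: if $\alpha_{e^*}\geq 1-\mu_{e^*}-\rho_{e^*}$ the desired bound on $\tau$ is immediate since $\tau<1$; otherwise take a minimizer $Q$ attaining $\alpha_{e^*}=\alpha'_{e^*}=\sum_{f\in(Q,e^*)}(\mu_f+\rho_f)$, set $R:=Q\times_{e^*}P$, and note that \eqref{eq:cond} gives $\sum_{f\in R}(\mu_f+\rho_f)\geq 1>\alpha_{e^*}+\mu_{e^*}+\rho_{e^*}$, so $R\setminus[Q,e^*]\neq\emptyset$. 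The key trick you are missing is the choice of $e'$ as the $\prec_R$-minimal element of $R\setminus[Q,e^*]$: then $e'\in(e^*,P)$ and $(R,e')\subseteq[Q,e^*]$, whence $\alpha_{e'}\leq\alpha'_{e'}\leq\sum_{f\in[Q,e^*]}(\mu_f+\rho_f)=\alpha_{e^*}+\mu_{e^*}+\rho_{e^*}$. Note that this argument uses only the single crossing $R$ and makes no consistency assumption on $\preceq_R$; with it, your extended coverage lemma and hence the theorem follow.
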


Intuitively, the values $\alpha'_e$ for $e \in E$ in the construction above correspond to the ``shortest-path distance'' to element $e$ in the abstract network $(E, \sets)$, with the truncation of $\alpha_e$ at $1 - \rho_e$ ensuring that $[\alpha_e, \alpha_e + \rho_e] \subseteq [0, 1]$.
Before we prove \cref{thm:feasible-decomposition}, let us first discuss some of its implications. 

\subsubsection{Interval Structure and Explicit Computation of $x$}

Given the vector~$\alpha$, the non-zero entries of $x$ can be easily determined in polynomial time.
Indeed, note that the set \mbox{$\Lambda := \{\alpha_e,\, \alpha_e + \rho_e \st e \in E\}$} induces a partition of $[0, 1]$ into at most~$2|E| + 1$ intervals (each with two consecutive values from $\Lambda \cup \{0, 1\}$ as its endpoints), such that $S_{\tau'} = S_{\tau''}$ whenever $\tau'$ and~$\tau''$ are in the same interval. Thus, there are at most~$2|E|+1$ non-zero entries in~$x$, whose values can be computed by sorting $\Lambda$, determining $S_{\tau}$ for some $\tau$ in each of the intervals induced by $\Lambda$, and then, for each occurring set~$S$, setting~$x_S$ to the total length of all intervals in which this set is attained.

\subsubsection{Special Case: Directed Graphs}

Consider the case where $\sets$ is the set of simple $s$-$t$-paths in a digraph $D = (V, A)$ and $E = V \cup A$. For $v \in V$, let $\sets_{sv}$ denote the set of simple $s$-$v$-paths in $D$. If we are given explicit access to $D$ (rather than accessing $\sets$ via a membership oracle), we can compute feasible decompositions as follows.
Without loss of generality, we can assume that for any $v \in V$ and $Q \in \sets_{sv}$ there is $Q' \in \sets_{st}$ with $Q \subseteq Q'$.\footnote{This can be ensured by introducing arcs $(v, t)$ with $\mu_{(v,t)} = 1$ and $\rho_{(v,t)} = 0$ for all $v \in V \setminus \{t\}$. Note that this does not change the set of feasible decompositions of $\rho$.}
Then $\alpha'_v = \min_{Q \in \sets_{sv}} \sum_{f \in Q \setminus \{v\}} \mu_f + \rho_f$ for $v \in V$ and $\alpha'_a = \min_{Q \in \sets_{sv}} \sum_{f \in Q} \mu_f + \rho_f$ for $a = (v, w) \in A$.
Hence, the vector~$\alpha'$ corresponds to shortest-path distances in $D$ with respect to $\rho+\mu$ (with costs on both arcs and nodes). Both $\alpha'$ and the corresponding feasible decomposition of $\rho$ can be computed by a single run of Dijkstra's~\cite{dijkstra1959note} algorithm.

\medskip

Computing feasible decompositions in the general case of arbitrary abstract networks is more involved. We show how this can be achieved in \cref{sec:computation}.

\subsubsection{Proof of \cref{thm:feasible-decomposition}}

The following lemma will be helpful for establishing the proof of the theorem.

\begin{lemma}\label[lemma]{lem:successor}
    Given $(E, \sets)$, $\rho$, $\mu$, and $\alpha$ as described in \cref{thm:feasible-decomposition},
   the following two conditions are fulfilled for every $P \in \sets$:
    \begin{enumerate}
        \item $\alpha_{t_P} + \mu_{t_P} + \rho_{t_P} \geq 1$ and \label{eq:t_p}
        \item for every $e \in P \setminus \{t_P\}$ there is $e' \in (e,P)$ with $\alpha_{e'} \leq \alpha_{e} + \mu_{e} + \rho_{e}$.\label{eq:succ}
    \end{enumerate} 
\end{lemma}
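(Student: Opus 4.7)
\emph{Proof plan.} The plan is to handle both parts by a case analysis on whether $\alpha_e$ equals $\alpha'_e$ or the truncation $1 - \rho_e$. The truncation branch is immediate: if $\alpha_{t_P} = 1 - \rho_{t_P}$, then $\alpha_{t_P} + \mu_{t_P} + \rho_{t_P} = 1 + \mu_{t_P} \geq 1$; and if $\alpha_e = 1 - \rho_e$ for some $e \in P \setminus \{t_P\}$, then any $e' \in (e, P)$ (nonempty because $e \neq t_P$) satisfies $\alpha_{e'} \leq 1 - \rho_{e'} \leq 1 \leq \alpha_e + \mu_e + \rho_e$. The substance of the lemma thus lies in the case $\alpha_e = \alpha'_e$, where the abstract network axiom will be used to combine the path $Q$ realizing $\alpha'_e$ with the given path $P$ and \eqref{eq:cond} will then be applied to the resulting path.

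For part~\ref{eq:t_p}, pick $Q \in \sets$ with $t_P \in Q$ achieving $\alpha'_{t_P} = \sum_{f \in (Q, t_P)} (\mu_f + \rho_f)$ and set $R := Q \times_{t_P} P$. Because $t_P$ is the maximum element of $P$, the suffix $[t_P, P]$ collapses to $\{t_P\}$, so $R \subseteq [Q, t_P] \cup \{t_P\} = [Q, t_P]$. Applying \eqref{eq:cond} to $R$ together with $\mu, \rho \geq 0$ then yields
\[
    1 \,\leq\, \textstyle\sum_{f \in R} (\mu_f + \rho_f) \,\leq\, \sum_{f \in [Q, t_P]} (\mu_f + \rho_f) \,=\, \alpha'_{t_P} + \mu_{t_P} + \rho_{t_P},
\]
as required.

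For part~\ref{eq:succ}, pick $Q$ realizing $\alpha'_e$ and set $R := Q \times_e P$, so $R \subseteq [Q, e] \cup [e, P]$. If $R$ is disjoint from $(e, P)$, then $R \subseteq [Q, e]$, and the same computation shows $\alpha_e + \mu_e + \rho_e \geq 1 \geq \alpha_{e'}$ for any $e' \in (e, P)$. Otherwise, take $e'$ to be the $\preceq_R$-minimum element of $R \cap (e, P)$: any $f \in (R, e') \cap [e, P]$ either equals $e$ or lies in $R \cap (e, P)$, and the latter would contradict the choice of $e'$; hence $(R, e') \subseteq [Q, e]$, giving
\[
    \alpha'_{e'} \,\leq\, \textstyle\sum_{f \in (R, e')} (\mu_f + \rho_f) \,\leq\, \sum_{f \in [Q, e]} (\mu_f + \rho_f) \,=\, \alpha'_e + \mu_e + \rho_e.
\]

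The main obstacle is this last subcase: one has to choose $e'$ so that no element of $(e, P)$ appears strictly before it in the order $\preceq_R$, since that is precisely what lets the $R$-prefix $(R, e')$ be bounded by the $Q$-prefix $[Q, e]$ and thus by $\alpha'_e$.
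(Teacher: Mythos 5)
Your proposal is correct and follows essentially the same route as the paper's proof: realize $\alpha'$ by a path $Q$, cross it with $P$ at $e$ (resp.\ $t_P$) via the abstract network axiom, invoke \eqref{eq:cond} together with nonnegativity of $\mu,\rho$, and pick a $\preceq_R$-minimal element beyond the $Q$-prefix so that its $R$-prefix is contained in $[Q,e]$. The only differences are cosmetic (direct argument versus contradiction for part~1, and taking the minimum of $R \cap (e,P)$ rather than of $R \setminus [Q,e]$, both of which yield the same prefix bound).
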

     
\begin{proof}
    We first show Statement~\ref{eq:t_p}.
    By contradiction assume $\alpha_{t_P} + \mu_{t_P} + \rho_{t_P} < 1$. Let $Q \in \sets$ with $t_P \in Q$ and $\sum_{f \in (Q, t_P)} \mu_f + \rho_f = \alpha_{t_P}$ and let $R := Q \times_{t_P} P$. Note that $R \subseteq [Q, t_P]$ and hence $\sum_{e \in R} \mu_e + \rho_e \leq \alpha_{t_P} + \mu_{t_P} + \rho_{t_P} < 1$, implying $\sum_{e \in R} \rho_{e} < 1 - \sum_{e \in R} \mu_{e}$, a contradiction to \eqref{eq:cond}.
    
    We now turn to Statement~\ref{eq:succ}. If $\alpha_e \geq 1 - \mu_e - \rho_e$, then the statement follows with $e' = t_P$ because $\alpha_{t_P} \leq 1 \leq \alpha_e + \mu_e + \rho_e$. Thus assume $\alpha_e < 1 - \mu_e - \rho_e$ and let $Q \in \sets$ with $\alpha_e =\sum_{f \in (Q,e)} \mu_{f} + \rho_{f}$.
    Let $R := Q \times_{e} P$.
    By \eqref{eq:cond} we observe that $\sum_{f \in R} \mu_{f} + \rho_{f} \geq 1 > \alpha_e + \mu_e + \rho_e$, which implies $R \setminus [Q, e] \neq \emptyset$ because $\mu, \rho \geq 0$.
    Thus, let $e' \in R \setminus [Q, e]$ be minimal with respect to $\prec_R$.
    Observe that $R \setminus [Q, e] \subseteq (e, P)$ and hence $e' \in (e, P)$. The statement then follows from
    \begin{align*}
        \alpha_{e'} & \textstyle \;\leq\; \sum_{f \in (R, e')} \mu_{f} + \rho_{f} \;\leq\; \sum_{f \in [Q, e]} \mu_{f} + \rho_{f} \;=\; \alpha_{e} + \mu_{e} + \rho_{e},
    \end{align*}
    where the second inequality is due to the fact that $(R, e') \subseteq [Q, e]$ by choice of $e'$ and the fact that $\mu, \rho \geq 0$. \qed
\end{proof}

We are now ready to proof the theorem.

\begin{proof}[\cref{thm:feasible-decomposition}]
We show that $x$ as described in \cref{thm:feasible-decomposition} is a feasible decomposition of $\rho$.
Note that $x$ fulfils \eqref{eq:simplex} and \eqref{eq:nonneg} by construction. Note further that $x$ fulfils \eqref{eq:consistency} because 
$$\elsum{S \subseteq E : e \in S} x_S = \prob{e \in S_\tau} = \prob{\alpha_e \leq \tau < \alpha_e + \rho_e} = \rho_e$$ for all $e \in E$, where the second identity follows from $0 \leq \alpha_e \leq 1 - \rho_e$.
With the help of \cref{lem:successor}, we can prove that $x$ fulfils \eqref{eq:covering} as follows. 

Let $P \in \sets$. 
     For $e \in P$ define 
     \begin{align*}
         \textstyle \phi(e) := \prob{S_\tau \cap [P, e] \neq \emptyset \;\wedge\; \tau \leq \alpha_e + \rho_e} \;+\; \sum_{f \in [P, e]} \mu_{f}.
     \end{align*}
     Let $F := \{e \in P : \phi(e) \geq \alpha_e + \mu_e + \rho_e \}$. 
     We will show that $t_P \in F$. 
     Note that this suffices to prove \eqref{eq:covering}, because the definition of $F$ together with Statement~\ref{eq:t_p} of \cref{lem:successor} imply $\phi(t_P) \geq \alpha_{t_P} + \mu_{t_P} + \rho_{t_P} \geq 1$, which in turn yields
     \begin{align*}
         \elsum{S \subseteq E : S \cap P \neq \emptyset} x_S \;=\; \prob{S_\tau \cap P \neq \emptyset} & \;\geq\; \phi(t_P) - \sum_{f \in P} \mu_{f} \;\geq\; 1 - \sum_{f \in P} \mu_{f} \;=\; \pi_P.
     \end{align*}
     
    We proceed to show $t_P \in F$.
    By contradiction assume this is not the case.
    Note that $F \neq \emptyset$ because $\alpha_{s_P} = 0$ and $\phi(s_P) = \prob{s_P \in S_\tau} + \mu_{s_P} = \rho_{s_P} + \mu_{s_P}$. 
    Thus let $e \in F$ be maximal with respect to $\prec_{P}$.
    Because $e \neq t_P$, we can invoke Statement~\ref{eq:succ} of \cref{lem:successor} and obtain $e' \in (e, P)$ with 
    \begin{align}
        \alpha_{e'} \leq \alpha_{e} + \mu_{e} + \rho_{e}.\label{eq:successor}
    \end{align}
    We will show that $e' \in F$, contradicting our choice of $e$.
    Note that the definition of $\phi$ and the fact that $e' \succ_P e$ imply
    \begin{align}
        \phi(e') & \;\geq\; \phi(e) \,+\, \prob{e' \in S_{\tau} \,\wedge\, \tau > \alpha_{e} + \rho_{e}} \,+\, \mu_{e'} \notag\\
        & \;\geq\; \alpha_{e} + \mu_{e} + \rho_{e} \,+\, \prob{e' \in S_{\tau} \,\wedge\, \tau > \alpha_{e} + \rho_{e}} + \mu_{e'}, \label{eq:abstract-induction}
    \end{align}
    where the second inequality follows from $e \in F$.
    Moreover, observe that $e' \in S_\tau$ if and only if $\alpha_{e'} \leq \tau < \alpha_{e'} + \rho_{e'}$ and hence
    \begin{align*}
         \prob{e' \in S_{\tau} \,\wedge\, \tau > \alpha_e + \rho_e} & \;=\; \alpha_{e'} + \rho_{e'} - \max \{\alpha_{e'},\, \alpha_e + \rho_{e}\}\\
         & \; \geq \; \alpha_{e'} + \rho_{e'} - (\alpha_e + \mu_e + \rho_e),
    \end{align*}
    where the inequality follows from \eqref{eq:successor}.
    Combining this bound with \eqref{eq:abstract-induction} yields $\phi(e') \geq \alpha_{e'} + \mu_{e'} + \rho_{e'}$ and hence $e' \in F$, contradicting our choice of $e$ and completing the proof of \cref{thm:feasible-decomposition}. \qed
\end{proof}

\subsection{A Combinatorial Algorithm for Computing Feasible Decompositions}
\label{sec:computation}

We now describe an algorithm for computing feasible decompositions in arbitrary abstract networks $(E, \sets)$ with affine requirements.
We assume that the ground set $E$ is given explicitly, while the set of abstract paths $\sets$ is given by a \emph{membership oracle} that, given $F \subseteq E$, either returns~$P \in \sets$ with $P \subseteq F$ and the corresponding order $\preceq_P$, or confirms that no $P \in \sets$ with $P \subseteq F$ exists.

By \cref{thm:feasible-decomposition}, it suffices to compute the values of $\alpha_e$ for all $e \in E$. Unfortunately, a complication arises in that even finding a path containing a certain element $e \in E$ is NP-hard.\footnote{Note that even for the special case where $\mathcal{P}$ corresponds to the set of simple $s$-$t$-paths in a digraph, finding $P \in \sets$ containing a certain arc $e$ is equivalent to the 2-disjoint path problem (for $P$ to be simple, its prefix up to the tail of $e$ and its suffix starting from the head of $e$ must be disjoint).  Simply side-stepping this issue by introducing additional elements as done in the second remark after \cref{thm:feasible-decomposition} is not possible here, because we are restricted to accessing $\sets$ only via the membership oracle.} 
However, as we show below, it is possible to identify a subset $U \subseteq E$ for which we can compute the values of $\alpha$, while the elements in $E \setminus U$ turn out to be redundant with respect to feasibility of the marginals.
From this, we obtain the following theorem.

\begin{theorem}\label{thm:computation}
    There is an algorithm that, given an abstract network $(E, \sets)$ via a membership oracle and $\rho, \mu \in [0, 1]^E$ such that $\sum_{e \in P} \rho_e \geq \pi_P := 1 - \sum_{e \in P} \mu_e$ for all $P \in \sets$, computes a feasible decomposition of $\rho$ for $(E, \sets, \pi)$ in time $\mathcal{O}(|E|^3 \cdot T_{\sets})$, where $T_\sets$ denotes the time for a call to the membership oracle.
\end{theorem}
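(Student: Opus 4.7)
The plan is to leverage \cref{thm:feasible-decomposition}, which reduces the task to computing the $\alpha$-values. Once these are known, the distribution $x$ can be extracted in $\mathcal{O}(|E|^2)$ time by sorting the $2|E|$ breakpoints $\{\alpha_e,\,\alpha_e+\rho_e \st e \in E\}$ and evaluating $S_\tau$ on each resulting interval, as already observed after \cref{thm:feasible-decomposition}. So the real algorithmic problem is to compute (a sufficient surrogate of) $\alpha_e$ for every $e \in E$ using only the membership oracle for $\sets$.

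The main obstacle is the one flagged in the paper's own discussion preceding the theorem: $\alpha'_e$ is a minimum over paths that must contain the specific element $e$, and even deciding whether such a path exists is NP-hard under the membership-oracle model. My approach is therefore to identify an ``active'' subset $U \subseteq E$ consisting of those elements whose $\alpha$-value can be certified by an explicit witnessing path discovered via the shortest-paths-in-abstract-networks subroutine of \cref{sec:abstract-shortest-paths}, and to argue that the remaining elements $E \setminus U$ are redundant. Concretely, I would show that replacing $(\rho_e,\mu_e)$ by $(0,0)$ for $e \in E \setminus U$ preserves condition \eqref{eq:cond} (because no path of $\sets$ uses such an $e$ in a way that would be detected by the Dijkstra-style search below), so that applying the construction of \cref{thm:feasible-decomposition} to the reduced data yields a decomposition feasible for the reduced marginals $\rho'$, which is then lifted to a decomposition of $\rho$ via \cref{lem:feasible-min}.

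The computation of $U$ and of $\alpha$ on $U$ would proceed iteratively in Dijkstra-like fashion. Maintain a growing set $U' \subseteq E$ on which $\alpha$ has been fixed, starting with $U' = \emptyset$. In each round, invoke the shortest-paths subroutine of \cref{sec:abstract-shortest-paths} on a reweighted copy of $(E,\sets)$, charging cost $\mu_f + \rho_f$ on $U'$ and a prohibitive cost on $E \setminus U'$, so that the cheapest path identifies the next unprocessed element $e^* \in E \setminus U'$ together with a prefix $(Q^*,e^*) \subseteq U'$ witnessing a candidate value for $\alpha_{e^*}$. A Dijkstra-style exchange argument, in which the abstract-network crossing axiom $P \times_e Q$ replaces the usual local neighbourhood argument (exactly as it does in the proof of \cref{lem:successor}), then shows that this candidate value equals $\alpha_{e^*}$. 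The iteration terminates when no further element can be reached; the remaining elements constitute $E \setminus U$ and are disposed of by the redundancy argument above.

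The running-time bound of $\mathcal{O}(|E|^3 \cdot T_\sets)$ then follows from performing at most $|E|$ rounds, each dominated by one call to the shortest-paths subroutine costing $\mathcal{O}(|E|^2 \cdot T_\sets)$ oracle calls, plus the $\mathcal{O}(|E|^2)$ time to assemble the distribution from $\alpha$. The step I expect to be the main obstacle is justifying the Dijkstra-like exchange in the abstract setting: one must argue that the modified shortest-path call indeed surfaces the \emph{correct} next element and a \emph{valid} witnessing prefix, despite having no local node/arc structure to fall back on, and one must verify that elements never reached by the process can truly be zeroed out without violating \eqref{eq:cond} on any path.
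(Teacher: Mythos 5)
Your overall architecture is the same as the paper's: grow a set $U$ in Dijkstra-like rounds using the abstract shortest-path subroutine of \cref{sec:abstract-shortest-paths}, zero out $(\rho,\mu)$ on $E \setminus U$, apply \cref{thm:feasible-decomposition} to the reduced data, lift via \cref{lem:feasible-min}, and charge $|E|$ rounds at $\mathcal{O}(|E|^2 T_{\sets})$ each. However, the two details on which the proof actually rests are wrong as you state them. First, the reweighting: the quantity to minimize in each round is $\sum_{f \in P \cap U} (\mu_f + \rho_f)$, i.e., elements \emph{outside} $U$ must be given cost \emph{zero}, and the newly labelled element is the $\preceq_P$-first element of $P \setminus U$ on a minimizer. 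With a prohibitive cost $M$ on $E \setminus U$, the subroutine primarily minimizes the number of unprocessed elements, and the crossing-based exchange argument breaks: for $R := Q \times_e P$, minimality only gives $M\,|P \setminus U| + \sum_{f \in P \cap U}(\mu_f+\rho_f) \leq M\,|R \setminus U| + \sum_{f \in R \cap U}(\mu_f+\rho_f)$, from which you cannot conclude $\sum_{f \in P \cap U}(\mu_f+\rho_f) \leq \sum_{f \in R \cap U}(\mu_f+\rho_f)$ when $R$ happens to use more unprocessed elements than $P$. Consequently the candidate value you record for $e^*$ need not be the minimum prefix length over all paths through $e^*$, and the analogue of \cref{lem:successor} for the reduced instance needs exactly that minimality (both an attaining witness and the lower bound against every other path).

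Second, the termination rule and the redundancy argument. ``Terminate when no further element can be reached'' is not the right criterion: under your weighting, as soon as some path lies entirely inside $U$ the search returns it and discovers nothing new, even though other paths may still have $\sum_{f \in P \cap U}(\mu_f+\rho_f)$ far below $1$ (e.g., $0.2$ inside $U$ and negligible $\mu+\rho$ outside). Zeroing out $E \setminus U$ at that point violates condition \eqref{eq:cond} for the reduced requirements $\bar{\pi}_P = 1 - \sum_{e \in P \cap U} \mu_e$, so \cref{thm:feasible-decomposition} no longer applies and the resulting distribution can violate \eqref{eq:covering} for the original $\pi$ on such a path. The justification that $E \setminus U$ is redundant is not that ``no path uses such an $e$ in a way detected by the search''; it is the termination criterion itself: one may stop only once $\min_{P \in \sets} \sum_{f \in P \cap U}(\mu_f+\rho_f) \geq 1$, which immediately yields statement~\ref{eq:U-sufficient} of \cref{lem:alg-decomposition} and is what licenses the reduction. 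With the zero-cost weighting this criterion is well defined and reachable (each round adds one element, so at most $|E|$ rounds), which is how the paper's \cref{alg:compute-decomposition} closes the argument; your version, as written, can stop prematurely and output an infeasible decomposition.
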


\subsubsection{The Algorithm}
\cref{thm:computation} is established via \cref{alg:compute-decomposition}, which computes values $\bar{\alpha}_e$ for elements $e$ in a subset $U \subseteq E$ as follows.
Starting from $U = \emptyset$, the algorithm iteratively computes a path $P$ minimizing $\sum_{f \in P \cap U} \mu_f + \rho_f$ and adds the first element $e$ of $P \setminus U$ to $U$, determining $\bar{\alpha}_e$ based on the length of $(P, e)$.

\vspace{-0.3cm}

\begin{algorithm}[h]
  \caption{Computing a feasible decomposition}\label{alg:compute-decomposition}
  \setstretch{1.1}
  Initialize $U := \emptyset$.\\
  \While{$\min_{P \in \sets} \sum_{f \in P \cap U} \mu_f + \rho_f < 1$\vspace{0.1cm}}{
    Let $P \in \operatorname{argmin}_{P \in \sets} \sum_{f \in P \cap U} \mu_{f} + \rho_{f}$.\\
    Let $e := \min_{\preceq_{P}} P \setminus U$.\\
    Set $U := U \cup \{e\}$ and $\bar{\alpha}_e := \min \big\{\sum_{f \in (P,e)} \mu_{f} + \rho_{f},\; 1 - \rho_e\big\}$.
  }
  \Return $\bar{\alpha}, U$
\end{algorithm}

\vspace{-0.6cm}

\subsubsection{Analysis} 

First note that in every iteration of the while loop, the set $P \setminus U$ is nonempty because $\sum_{f \in P} \mu_f + \rho_f \geq 1$ by the assumption on the input in \cref{thm:computation}.
Hence the algorithm is well-defined and terminates after at most $|E|$ iterations. We further remark that finding $P \in \sets$ minimizing $\sum_{e \in P \cap U} \rho_f + \mu_f$ can be done in time $\mathcal{O}(|E|^2 T_{\sets})$ using the \cref{alg:shortest-path} described in \cref{sec:abstract-shortest-paths}. The following lemma (proven below) then suffices to complete the proof of
\cref{thm:computation}.

\begin{restatable}{lemma}{restateLemAlgDecomposition}\label[lemma]{lem:alg-decomposition}
  Let $\bar{\alpha}, U$ be the output of \cref{alg:compute-decomposition} and define $\bar{\rho}_e := \rho_e$ and $\bar{\mu}_e := \mu_e$ for $e \in U$ and $\bar{\rho}_e := 0$ and $\bar{\mu}_e := 0$ for $e \in E \setminus U$. Then
  \begin{enumerate}
      \item $\sum_{e \in P} \bar{\rho}_e \geq \bar{\pi}_P := 1 - \sum_{e \in P} \bar{\mu}_e$ for all $P \in \sets$ and\label{eq:U-sufficient}
      \item $\bar{\alpha}_e = \min \left\{\sum_{f \in (Q, e)} \bar{\mu}_{f} + \bar{\rho}_f \st Q \in \sets,\, e \in Q \right\} \cup \{1 - \bar{\rho}_e\}$ for all $e \in U$.\label{eq:alpha-prime}
  \end{enumerate}
\end{restatable}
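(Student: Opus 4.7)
The plan is to address the two conclusions of \cref{lem:alg-decomposition} separately. Statement~\ref{eq:U-sufficient} follows almost immediately from the exit condition of the while loop in \cref{alg:compute-decomposition}: upon termination, every $P \in \sets$ satisfies $\sum_{f \in P \cap U} \mu_f + \rho_f \geq 1$. Since $\bar\mu_f + \bar\rho_f = \mu_f + \rho_f$ for $f \in U$ and equals $0$ for $f \in E \setminus U$, the left-hand side equals $\sum_{f \in P} \bar\mu_f + \bar\rho_f$, which rearranges to the required inequality $\sum_{f \in P} \bar\rho_f \geq \bar\pi_P$.

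The substantive work lies in Statement~\ref{eq:alpha-prime}. Fix $e \in U$ and let $i$, $P$, and $U_i$ denote the iteration at which $e$ was added, the path chosen there, and the value of $U$ at the start of that iteration. Since $e = \min_{\preceq_P} P \setminus U_i$, the prefix $(P, e)$ lies entirely in $U_i \subseteq U$, so $\sum_{f \in (P, e)} \mu_f + \rho_f = \sum_{f \in (P, e)} \bar\mu_f + \bar\rho_f$ and $1 - \rho_e = 1 - \bar\rho_e$ both belong to the set whose minimum is to be computed; since the algorithm sets $\bar\alpha_e$ equal to the minimum of these two values, $\bar\alpha_e$ is bounded below by the minimum over the full set. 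For the reverse direction, the bound $\bar\alpha_e \leq 1 - \bar\rho_e$ is immediate from the algorithm, and the remaining task is to show $\bar\alpha_e \leq \sum_{f \in (Q, e)} \bar\mu_f + \bar\rho_f$ for every $Q \in \sets$ with $e \in Q$. Here I would invoke the abstract network axiom, setting $R := Q \times_e P$, so that $R \in \sets$ and $R \subseteq [Q, e] \cup [e, P]$. Because $e \notin U_i$, this gives
\[
    R \cap U_i \;\subseteq\; \bigl((Q, e) \cap U_i\bigr) \,\cup\, \bigl((e, P) \cap U_i\bigr),
\]
and the minimality of $P$ at iteration $i$ yields $\sum_{f \in P \cap U_i} \mu_f + \rho_f \leq \sum_{f \in R \cap U_i} \mu_f + \rho_f$. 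Splitting the disjoint union $P \cap U_i = (P, e) \cup \bigl((e, P) \cap U_i\bigr)$ and cancelling the common suffix contribution $\sum_{f \in (e, P) \cap U_i} \mu_f + \rho_f$ from both sides gives
\[
    \sum_{f \in (P, e)} \mu_f + \rho_f \;\leq\; \sum_{f \in (Q, e) \cap U_i} \mu_f + \rho_f \;\leq\; \sum_{f \in (Q, e)} \bar\mu_f + \bar\rho_f,
\]
where the last step uses nonnegativity of $\bar\mu$ and $\bar\rho$ together with $U_i \subseteq U$. Combined with $\bar\alpha_e \leq \sum_{f \in (P, e)} \mu_f + \rho_f$ from the algorithm's assignment, this yields the claim.

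The main obstacle is precisely this splicing step: the algorithm's greedy choice provides information about the total weight of $P \cap U_i$, not directly about the prefix $(P, e)$, whereas the statement to be proved is phrased in terms of prefixes. The abstract network axiom bridges the gap by producing a single path $R$ whose prefix coincides with that of $Q$ and whose suffix coincides with that of $P$, so that the inconvenient suffix contributions cancel on both sides of the optimality inequality and the desired prefix-only comparison emerges. Everything else reduces to careful bookkeeping of which contributions survive when restricting sums to $U_i$ or to $U$.
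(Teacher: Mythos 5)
Your proposal is correct and follows essentially the same route as the paper's proof: both directions rest on the loop's termination criterion for Statement~\ref{eq:U-sufficient}, and on taking $R := Q \times_e P^{(i)}$, invoking the minimality of the path chosen at iteration $i$, and cancelling the common suffix contribution over $(e,P^{(i)}) \cap U^{(i-1)}$ to obtain the prefix comparison for Statement~\ref{eq:alpha-prime}. If anything, your write-up is slightly more careful than the paper's in explicitly treating the truncation at $1-\rho_e$ and the lower-bound direction of the equality.
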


Indeed, observe that \cref{lem:alg-decomposition} together with \cref{thm:feasible-decomposition} implies that $\bar{\alpha}$ induces a feasible decomposition $\bar{x}$ of $\bar{\rho}$ for $(E, \sets)$ and $\bar{\pi}$. Because $\bar{\rho}_e \leq \rho_e$ for all $e \in E$ and $\bar{\pi}_P \geq \pi_P$ for all $P \in \sets$, this decomposition can be extended to a feasible decomposition of $\rho$ for $(E, \sets)$ and $\pi$ by arbitrarily incorporating the elements from $E \setminus U$.
This completes the proof of \cref{thm:computation}.

\begin{proof}[\cref{lem:alg-decomposition}]
    Let $P^{(i)} \in \sets$ and $e^{(i)} \in E$ be the path and the element chosen in iteration $i$ of the while loop. Let $U^{(0)} := \emptyset$ and $U^{(i)} := \{e^{(1)}, \dots, e^{(i)}\}$. Note that $U^{(i-1)}$ is the state of $U$ at the beginning of iteration $i$ of the while loop.
    Note that \eqref{eq:cond} implies
    $\sum_{f \in P^{(i)}} \rho_{f} + \mu_{f} \geq 1 > \sum_{f \in P^{(i)} \cap U^{(i-1)}} \mu_{f} + \rho_{f}$, where the final inequality follows from the termination criterion of the while loop. Thus $P^{(i)} \setminus U^{(i-1)} \neq \emptyset$, implying that the second line of the while loop is well-defined and the algorithm adds a new element to $U$ in each iteration of the while loop. We conclude that the algorithm terminates after $k$ iterations for some $k \leq |E|$. 
    
    The termination criterion implies $\sum_{e \in P \cap U} \rho_e \geq 1 - \sum_{e \in P \cap U} \mu_e \geq \pi_P$ for all $P \in \sets$, proving statement \ref{eq:U-sufficient} of the lemma.
    We next establish statement \ref{eq:alpha-prime} of the lemma. To this end let $e \in U$ and $Q \in \sets$ with $e \in Q$. Let $i \in \{1, \dots, k\}$ be such that $e = e^{(i)}$ and define $R := Q \times_{e} P^{(i)}$.
    Note that 
    \begin{align*}
        \elsum{\qquad f \in P^{(i)} \cap U^{(i-1)}} \mu_{f} + \rho_{f} 
        \ \leq \ \elsum{\quad f \in R \cap U^{(i-1)}} \mu_{f} + \rho_{f} 
        \ \leq \ \elsum{\quad f \in (Q, e) \cap U^{(i-1)}} \mu_{f} + \rho_{f} \;+\; \elsum{\qquad f \in [e, P^{(i)}] \cap U^{(i-1)}} \mu_{f} + \rho_{f},
    \end{align*}
    where the first inequality follows from the choice of $P^{(i)}$ by the algorithm in iteration $i$, and the second inequality follows from construction of $R$.
    From this, we conclude 
     \begin{align*}
        \alpha_{e} 
        \ = \ \elsum{\quad f \in (P^{(i)}, e)} \mu_{f} + \rho_{f}
        \ = \ \elsum{\qquad f \in (P^{(i)}, e) \cap U^{(i-1)}} \mu_{f} + \rho_{f}
        \ \leq \ \elsum{\qquad f \in (Q, e) \cap U^{(i-1)}} \mu_{f} + \rho_{f}
        \ \leq \ \elsum{\quad f \in (Q, e)} \mu_{f} + \rho_{f},
    \end{align*}
    where the second identity follow from $ (P^{(i)}, e) \subseteq U^{(i-1)}$ by choice of $e = e^{(i)}$ as first element on $P^{(i)} \setminus U^{(i-1)}$. \qed
\end{proof}

\section{Computing Shortest Paths in Abstract Networks}
\label{sec:abstract-shortest-paths}

In this section, we consider the following natural generalization of the classic shortest $s$-$t$-path problem in digraphs: Given an abstract network $(E, \sets)$ and a cost vector $\gamma \in \mathbb{R}_+^E$, find a path $P \in \sets$ minimizing $\sum_{e \in P} \gamma_e$.
We provide a combinatorial, strongly polynomial algorithm for this problem, accessing~$\sets$ only via a membership oracle.

In fact, the question for such an algorithm was already raised by McCormick~\cite{mccormick1996polynomial}, who conjectured that it can be used to turn an adaptation of his combinatorial, but only weakly polynomial algorithm for the maximum abstract flow problem into a strongly polynomial one. Our result shows that such a shortest-path algorithm indeed exists, but leave it open how to use it to improve the running time of the maximum abstract flow algorithm.
Another interesting aspect of our result is that it shows that it is possible to efficiently compute shortest paths in digraphs, even when no explicit local information about the network is given, such as incidence lists of nodes is given.

\begin{theorem}\label{thm:shortest-path}
  There is an algorithm that, given an abstract network $(E, \sets)$ via a membership oracle and a vector $\gamma \in \mathbb{R}_+^E$, computes $P \in \sets$ minimizing $\sum_{e \in P} \gamma_e$
  in time $\mathcal{O}(|E|^2 \cdot T_{\sets})$, where $T_\sets$ denotes the time for a call to the membership oracle for $\sets$.
\end{theorem}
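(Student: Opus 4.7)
My plan is to adapt Dijkstra's algorithm to the abstract-network setting. Define the \emph{shortest-prefix distance} $d_e := \min\{\gamma([Q, e]) : Q \in \sets,\, e \in Q\}$ for each $e \in E$ (with $d_e := \infty$ if $e$ lies on no abstract path); then $\min_{P \in \sets} \gamma(P) = \min\{d_e : e = t_P \text{ for some } P \in \sets\}$, and a minimizing path can be reconstructed from an optimal witness. The outer loop maintains a set $U \subseteq E$ of settled elements, initially empty, together with correct $d$-values for all $e \in U$, and in at most $|E|$ iterations settles one additional element per iteration.

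The element settled in each iteration is $e^* \in \operatorname{argmin}_{f \in E \setminus U} \tilde d_f$, where $\tilde d_f := \min\{\gamma([P, f]) : P \in \sets,\, f \in P,\, (P, f) \subseteq U\}$ (with $\tilde d_f := \infty$ if no such pair exists); we then set $d_{e^*} := \tilde d_{e^*}$ and add $e^*$ to $U$. Correctness of this settling step follows by the standard Dijkstra-style exchange argument, specialised to abstract networks: supposing for contradiction that some $Q \in \sets$ with $e^* \in Q$ satisfies $\gamma([Q, e^*]) < \tilde d_{e^*}$, the $\preceq_Q$-minimum element $f$ of $Q \setminus U$ exists (since $e^* \in Q \setminus U$), and $(Q, f)$ is then a valid witness giving $\tilde d_f \leq \gamma([Q, f]) \leq \gamma([Q, e^*]) < \tilde d_{e^*}$, contradicting either the definition of $\tilde d_{e^*}$ (if $f = e^*$) or its minimality among unsettled elements (if $f \neq e^*$).

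The principal algorithmic obstacle is implementing the subroutine that, in $\mathcal{O}(|E| \cdot T_\sets)$ time per iteration, finds $e^*$ together with a witness path. A natural first attempt is to query the membership oracle on $U \cup \{f\}$ for each $f \in E \setminus U$, giving for each $f$ some valid candidate pair; however, since the oracle may return an arbitrary $P \subseteq U \cup \{f\}$ rather than one achieving $\tilde d_f$, this yields only upper bounds on the $\tilde d_f$ values. To overcome this, I plan to maintain across iterations a best-known witness $P_f$ for each unsettled $f$, updated via a single oracle call (or via an application of the abstract exchange operation $P^* \times_\cdot P_f$ using the newly settled element's witness $P^*$) whenever $U$ grows. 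The key invariant to establish is that after these updates in each iteration, the unsettled $f$ minimizing $\gamma([P_f, f])$ actually satisfies $\gamma([P_f, f]) = \tilde d_f$, so settling it is correct.

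Combining at most $|E|$ outer iterations with $\mathcal{O}(|E|)$ oracle calls each then yields the claimed $\mathcal{O}(|E|^2 \cdot T_\sets)$ runtime. The most delicate part of the proof, I expect, will be verifying this maintenance invariant in full generality, and in particular handling the bootstrap phase where $U = \emptyset$ (so that a direct query on $\{f\}$ returns nothing useful for typical abstract networks)---likely by seeding initialization with a single call on $E$ to obtain a starting path whose prefixes provide initial witnesses for its constituent elements.
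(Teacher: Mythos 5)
Your high-level architecture (a Dijkstra-style outer loop that settles the unsettled element with smallest tentative prefix label) is the same as the paper's, and your exchange argument for the idealized quantities $\tilde d_f$ is fine. But the proof has a genuine gap exactly where you flag it: the maintenance of correct tentative labels using only a membership oracle that returns an \emph{arbitrary} path inside the query set. Your two suggested update mechanisms do not obviously work, and the ``key invariant'' you would need is left unproven. Concretely: (a) a single oracle call on $U \cup \{f\}$ per unsettled $f$ may return a path that does not contain $f$ at all, and even when it does, its prefix before $f$ is arbitrary, so the stored witness $P_f$ is only an upper bound on $\tilde d_f$; nothing in your sketch forces the bound to be tight even just at the minimizer, which is what your settling argument needs. (b) The crossing $P^* \times_e P_f$ is only guaranteed to be contained in $[P^*,e] \cup [e,P_f]$; the element $f$ need not lie on the crossed path, so this operation does not in general produce a new witness for $f$. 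Since the correctness of the settling step in your paragraph two explicitly uses exact values $\tilde d_f$ for \emph{all} unsettled $f$ (to derive the contradiction with minimality), the entire argument hinges on the unproven maintenance invariant, which is the actual crux of the problem.

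The paper resolves this differently, in two ways you may want to compare against. First, \cref{alg:shortest-path} never attempts to compute exact shortest-prefix distances: when an element $e$ is settled it runs an \emph{inner} loop of up to $|E|$ oracle calls on the shrinking set $F = (E \setminus T) \cup [Q_e,e]$, and for each returned path $P$ it relaxes only the $\preceq_P$-first element $e'$ of $P \setminus [Q_e,e]$ (giving the bound $\psi_{e'} \le \psi_e + \gamma_{e'}$, since $[P,e'] \subseteq [Q_e,e] \cup \{e'\}$) and then deletes $e'$ from $F$; this systematic deletion is what neutralizes the arbitrariness of the oracle's answers, because the crossed path $Q_e \times_e P$ lies inside $F$ and therefore some element of it must eventually be picked as a first-new element before the inner loop can stop. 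Second, the invariant that makes this work (\cref{lem:shortest-path-invariant}) is deliberately weaker than label exactness: it only states that every $P \in \sets$ retains some element $e$ with $[e,P] \cap T = \emptyset$ and $\psi_e \le \sum_{f \in [P,e]} \gamma_f$, which at termination already gives $\gamma(Q_t) = \psi_t \le \psi_e \le \gamma(P)$ for every $P$. If you want to complete your proposal, you would either have to prove your tightness-at-the-minimizer invariant for a concrete update rule (which the obstacles above suggest will fail with one call per element), or replace exactness by a relaxed invariant of this existential type together with an inner-loop mechanism that guarantees the relevant crossed path is eventually hit; as it stands, the proposal does not constitute a proof.
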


\subsection{The Algorithm}
For notational convenience, we assume that there is $s, t \in E$ with $s_P = s$ and $t_P = t$ for all $P \in \sets$. Note that this assumption is without loss of generality, as it can be ensured by adding dummy elements $s$ and $t$ to $E$ and including them at the start and end of each path, respectively.

The algorithm is formally described as \cref{alg:shortest-path}. It can be seen as a natural extension of Dijkstra's~\cite{dijkstra1959note} algorithm in that it maintains for each element \mbox{$e \in E$} a (possibly infinite) label $\psi_e$ indicating the length of the shortest segment $[Q_e, e]$ for some $Q_e \in \sets$ with $e \in Q_e$ found so far, and in that its outer loop iteratively chooses an element with currently smallest label for processing.
However, updating these labels is more involved, as an abstract network does not provide local concepts such as ``the set of arcs leaving a node''. In its inner loop, the algorithm therefore carefully tries to extend the segment $Q_e$ for the currently processed element $e$ to find new shortest segments $Q_{e'}$ for other elements $e'$.

\vspace{-0.3cm}

\begin{algorithm}[h]
  \caption{Computing a shortest path in an abstract network}\label{alg:shortest-path}
  \setstretch{1.1}
  Initialize $T := \emptyset$, $\psi_s := \gamma_s$, and $\psi_e := \infty$ for all $e \in E \setminus \{s\}$.\\
  Let $Q_s \in \sets$.\\
  \While{$\psi_t > \min_{f \in E \setminus T} \psi_f$\vspace{0.1cm}}{
    Let $e \in \operatorname{argmin}_{f \in E \setminus T} \psi_f$.\\
    Let $F := (E \setminus T) \cup [Q_e, e]$.\\
    \While{there is $P \in {\sets}$ with $P \subseteq F$\vspace{0.1cm}}{
      Let $e' := \min_{\preceq_{P}} P \setminus [Q_e, e]$.\\
      Set $F := F \setminus \{e'\}$.\\
      \If{$\sum_{f \in [P, e']} \gamma_f < \psi_{e'}$\vspace{0.1cm}}{
        Set $\psi_{e'} := \sum_{f \in [P, e']} \gamma_f$ and $Q_{e'} := P$.
      }
    }
    Set $T := T \cup \{e\}$.
  }
  \Return $Q_t$
\end{algorithm}

\vspace{-0.8cm}

\subsection{Analysis} 

The proof of the correctness of \cref{alg:shortest-path} crucially relies on the following lemma, which intuitively certifies that the algorithm does not overlook any shorter path segments when processing an element.

\begin{restatable}{lemma}{restateLemShortestPathInvariant}
\label[lemma]{lem:shortest-path-invariant}
    \cref{alg:shortest-path} maintains the following invariant: For all $P \in \sets$, there is $e \in P$ with $[e, P] \cap T = \emptyset$ and $\psi_e \leq \sum_{f \in [P, e]} \gamma_f$.
\end{restatable}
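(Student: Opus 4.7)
The plan is to prove the invariant by induction on the iterations of the outer while loop. The base case ($T = \emptyset$) is immediate: for every $P \in \sets$ we have $s \in P$ by the assumption $s_P = s$, so $e = s$ serves as a witness since $[s, P] \cap \emptyset = \emptyset$ and $\psi_s = \gamma_s = \sum_{f \in [P, s]} \gamma_f$.

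For the inductive step, suppose the invariant holds at the start of an outer iteration that selects $e^* \in \operatorname{argmin}_{f \in E \setminus T} \psi_f$, and fix an arbitrary $P \in \sets$. Let $e$ be the IH witness. If $e^* \notin [e, P]$ then $e$ remains a valid witness after the iteration, because labels only decrease and $[e, P] \cap (T \cup \{e^*\}) = [e, P] \cap T = \emptyset$. The essential case is $e^* \in [e, P]$, i.e., $e \preceq_P e^*$; here I would first note that $\psi_{e^*} \leq \psi_e$ by the argmin choice (using $e \in E \setminus T$), and hence $\psi_{e^*} \leq \sum_{f \in [P, e]} \gamma_f \leq \sum_{f \in [P, e^*]} \gamma_f$.

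Next I would invoke the crossing property to form the abstract path $R := Q_{e^*} \times_{e^*} P \subseteq [Q_{e^*}, e^*] \cup [e^*, P]$. Since $[e^*, P] \subseteq [e, P]$ is disjoint from $T$, we obtain $R \subseteq (E \setminus T) \cup [Q_{e^*}, e^*] = F$ at the start of the inner loop. The inner loop terminates only when no abstract path is a subset of $F$, and an element is removed from $F$ only if it lies in some $P^* \setminus [Q_{e^*}, e^*]$; consequently, at least one element $g \in R \setminus [Q_{e^*}, e^*]$ must be removed, and since $R \setminus [Q_{e^*}, e^*] \subseteq [e^*, P] \setminus \{e^*\} = (e^*, P]$, we conclude $g \in (e^*, P]$ with $g \neq e^*$.

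Finally I would verify that $g$ certifies the invariant after the iteration. At the inner step removing $g$, the algorithm picks some $P^* \subseteq F$ with $g = \min_{\preceq_{P^*}} P^* \setminus [Q_{e^*}, e^*]$, so $[P^*, g) \subseteq [Q_{e^*}, e^*]$ and therefore $\sum_{f \in [P^*, g)} \gamma_f \leq \sum_{f \in [Q_{e^*}, e^*]} \gamma_f = \psi_{e^*}$. Since the algorithm enforces $\psi_g \leq \sum_{f \in [P^*, g]} \gamma_f$ after such a step, chaining the bounds yields $\psi_g \leq \psi_{e^*} + \gamma_g \leq \sum_{f \in [P, e^*]} \gamma_f + \gamma_g \leq \sum_{f \in [P, g]} \gamma_f$, where the last inequality uses $g \succ_P e^*$. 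Moreover $g \in F \setminus [Q_{e^*}, e^*] \subseteq E \setminus T$ with $g \neq e^*$, and $[g, P] \subseteq (e^*, P] \subseteq [e, P]$ is disjoint from $T \cup \{e^*\}$, so $g$ is the required witness. The main obstacle is pinpointing that some element strictly beyond $e^*$ on $P$ must indeed be removed from $F$ during the inner loop; this is resolved precisely by the construction of the crossing path $R$ together with the observation that elements of $[Q_{e^*}, e^*]$ are protected from removal.
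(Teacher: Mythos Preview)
Your proof is correct and follows essentially the same approach as the paper's: induction on outer iterations, the trivial case when the selected element $e^*$ misses the witness suffix, and otherwise forming the crossing path $R = Q_{e^*} \times_{e^*} P$, arguing that some element of $R \setminus [Q_{e^*}, e^*] \subseteq (e^*, P)$ must be removed in the inner loop, and bounding its label via the prefix of the path $P^*$ that triggered its removal. The only differences are cosmetic (you first establish $\psi_{e^*} \leq \sum_{f \in [P, e^*]} \gamma_f$ explicitly, whereas the paper routes the same inequality through $\psi_{e_P}$), and your half-open notation $(e^*,P]$, $[P^*,g)$ corresponds to the paper's $(e^*,P)$, $(P^*,g)$.
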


\begin{proof}
    The invariant is clearly fulfilled initially as $T = \emptyset$ and $\psi_s = \gamma_s = \sum_{e \in [P, s]} \gamma_e$ for all $P \in \sets$ (recall our assumption $s = s_P \in P$ for all $P \in \sets$).
    Now assume that the invariant holds at the beginning of an iteration of the outer while loop, i.e., for each $P \in \sets$ there is $e_P \in P$ with $[e_P, P] \cap T = \emptyset$ and $\psi(e_P) = \sum_{f \in Q_{e_P}} \gamma_f \leq \sum_{f \in [P, e_P]} \gamma_{f}$.
    Let $e \in \operatorname{argmin}_{f \in E \setminus T} \psi_f$ be the element selected at the beginning of this iteration and added to $T$ at its end.
    Consider any path $P \in \sets$ and distinguish two cases:
    \begin{description}
        \item Case 1: $e \notin [e_P, P]$. In this case, $e_P$ continues to fulfil the conditions of the invariant for $P$, as $[e_P, P] \cap (T \cup \{e\}) = \emptyset$ and $\psi_{e_P} = \sum_{f \in [Q_{e_P}, e_P]} \gamma_f$ either remains unchanged or is decreased during the course of the inner while loop.
        \item Case 2: $e \in [e_P, P]$. Let $R := Q_e \times_e P$. Observe that $R \subseteq E \setminus T \cup [Q_e, e]$ and hence there must be an iteration of the inner while loop in which some $e' \in R$ is chosen for removal of $F$ with $e'$ being $\prec_{P'}$-minimal in $P' \setminus [Q_e, e]$ for some $P' \in \sets$.
        In that iteration of the inner while loop, $Q_{e'}$ is set to $P'$ and $\psi_{e'}$ is set to 
        \begin{align*}
            \psi_{e'} = \elsum{f \in [P', e']} \gamma_f \leq \elsum{f \in [Q_e, e]} \gamma_f \;+\; \gamma_{e'} = \psi_e + \gamma_{e'} \leq \psi_{e_P} + \gamma_{e'},
        \end{align*}
        where the fist inequality follows from $[P', e'] \subseteq [Q_e, e] \cup \{e'\}$ by choice of $e'$ as first element of $P'$ not in $[Q_e, e]$, and the second inequality follows from the choice of $e \in \operatorname{argmin}_{f \in E \setminus T} \psi_f$ and the fact that $e_P \notin T$.
        Note that $e' \in R \setminus [Q_e, e]$ implies $e' \in (e, P)$ and
        $[e', P] \cap (T \cup \{e\}) = \emptyset$, as $(e, P) \subseteq [e_P, P]$.
        Note further that  $\psi_{e'} \leq \psi_{e_P} + \gamma_{e'} \leq \sum_{f \in [P, e']} \gamma_f$, from which we conclude that $e'$ fulfils the statement of the invariant for $P$. \qed
    \end{description} 
\end{proof}

We now show how the preceding lemma implies the correctness of \cref{alg:shortest-path}, thus proving \cref{thm:shortest-path}.

\begin{proof}[\cref{thm:shortest-path}]
    When \cref{alg:shortest-path} terminates, $\psi_t \leq \psi_f$ for all $f \in E \setminus T$ by the termination criterion of the outer while loop.
    Let $P \in \sets$.
    By \cref{lem:shortest-path-invariant} there is an element $e \in P \setminus T$ with $\psi_e \leq \sum_{f \in [P, e]} \gamma_f$.
    Note that this implies $\sum_{f \in Q_t} \gamma_f = \psi_t \leq \psi_e = \sum_{f \in [P, e]} \gamma_f \leq \sum_{f \in P} \gamma_f$,
    where the last inequality uses the fact that $\gamma_f \geq 0$ for all $f \in E$. 
    We conclude that the path $Q_t$ returned by the algorithm is indeed a shortest path.
    
    To see that the algorithm terminates in polynomial time, observe that 
    the outer while loop stops after at most $|E|-1$ iterations, as
    in each iteration an element from $E \setminus \{t\}$ is added to $T$ and the termination criterion is fulfilled if $T = E \setminus \{t\}$.
    Furthermore, each iteration of the inner while loop removes an element from $F$ and hence after at most $|F| \leq |E|$ iterations no path $P \subseteq F$ exists anymore, implying that the inner while loop terminates. \qed
\end{proof}

\section{Dahan et al.'s Network Security Game}
\label{sec:game}

Dahan et al.~\cite{dahan2021probability} studied the following network security game. The input is a set system $(E, \sets)$ with capacities $u \in \mathbb{R}_+^E$, transportation cost $c \in \mathbb{R}_+^E$ and interdiction costs $d \in \mathbb{R}_+^E$.
There are two players: the \emph{routing entity} $R$, whose strategy space is the set of \emph{flows} $F := \{f \in \mathbb{R}_+^\sets \st \sum_{P \in \sets : e \in P} f_P \leq u_e\  \forall e \in E\}$, and the \emph{interdictor} $I$, who selects a subset of elements $S \subseteq E$ to interdict, with the intuition that all flow on interdicted elements is disrupted.
Given strategies $f \in F$ and $S \subseteq E$, the payoffs for $R$ and $I$, respectively, are given by
\begin{align*}
    \Phi_R(f, S) & := \textstyle \sum_{P \in \sets : P \cap S = \emptyset} f_P - \sum_{P \in \sets} \sum_{e \in P} c_e f_P \text{ and}\\
    \Phi_I(f, S) & := \textstyle  \sum_{P \in \sets : P \cap S \neq \emptyset} f_P - \sum_{e \in S} d_e,
\end{align*}
respectively. 
That is, $R$'s payoff is the total amount of non-disrupted flow, reduced by the cost for sending flow $f$, while $I$'s payoff is the total amount of flow that is disrupted, reduced by the interdiction cost for the set $S$.

We are interested in finding (mixed) Nash equilibria (NE) for this game, i.e., random distributions $\sigma_R$ and $\sigma_I$ over the strategy spaces of $I$ and $R$, respectively, such that no player can improve their expected payoff by unilateral deviation. However, the efficient computation of such equilibria is hampered by the fact that the strategy spaces of both players are of exponential size/dimension in the size of the ground set $E$. 
To overcome this issue, Dahan et al.~\cite{dahan2021probability} considered the following pair of primal and dual linear programs:
\begin{alignat*}{6}
    [\text{LP}_R]~\max\ && \sum_{P \in \sets} \pi^c_P & f_P &&& \quad [\text{LP}_I]~\min\  && \sum_{e \in E} u_e \eta_e + & d_e\rho_e \\
    \text{s.t.}\ && \elsum{P \in \sets : e \in P} f_P & \leq u_e & \ \forall e \in E && 
    \qquad \text{s.t.}\ && \elsum{e \in P} \eta_e + \rho_e & \geq \pi^c_P & \ \forall P \in \sets\\
    && \elsum{P \in \sets : e \in P} f_P & \leq d_e & \ \forall e \in E && && \eta, \rho & \geq 0\\
    && f & \geq 0 & && && &
\end{alignat*}
where $\pi^c_P := 1 - \sum_{a \in P} c_a$. 
They showed the following result.
\begin{theorem}[Dahan et al.~\cite{dahan2021probability}]\label{thm:NE-dahan}
    Let $f^*$ and $(\eta^*, \rho^*)$ be optimal solutions to $[\text{LP}_R]$ and $[\text{LP}_I]$, respectively. Let $\sigma_I$ be a feasible decomposition of $\rho^*$ for $(E, \sets, \pi)$, where $\pi_P := \pi^c_P - \sum_{e \in P} \eta^*_e$, and let $\sigma_R$ be a distribution over $F$ with $\sum_{f \in F} \sigma_{R,f} f_P = f^*_P$. Then $(\sigma_R, \sigma_I)$ is a Nash equilibrium.
\end{theorem}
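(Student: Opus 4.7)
The plan is to verify the Nash equilibrium property by checking, separately for each player, that their expected payoff against the opponent's strategy cannot be improved by a unilateral deviation. Since both $\Phi_R$ and $\Phi_I$ are linear in $f$, averaging $\Phi_R(f,S)$ and $\Phi_I(f,S)$ over $\sigma_R$ simply replaces $f$ by its expectation $f^*$, so it suffices to work with the pure flow $f^*$. The key ingredients will be the feasible decomposition properties $\prob{e \in S} = \rho^*_e$ and $\prob{P \cap S \neq \emptyset} \geq \pi_P = \pi^c_P - \sum_{e \in P} \eta^*_e$ for $S \sim \sigma_I$, the two primal constraints $\sum_{P \ni e} f^*_P \leq u_e$ and $\sum_{P \ni e} f^*_P \leq d_e$, and strong LP duality $\sum_P \pi^c_P f^*_P = \sum_e(u_e \eta^*_e + d_e \rho^*_e)$.

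For the routing entity $R$, I would first upper bound the expected payoff of an arbitrary deviation $f \in F$ by substituting $1 - \prob{P \cap S \neq \emptyset} \leq 1 - \pi_P = \sum_{e \in P}(c_e + \eta^*_e)$ into $\Phi_R(f,S)$; the $c_e$-terms cancel and the capacity constraint then yields $\mathbb{E}[\Phi_R(f,S)] \leq \sum_e \eta^*_e \sum_{P \ni e} f_P \leq \sum_e u_e \eta^*_e$. I would then match this bound at $f^*$ by applying the union bound $\prob{P \cap S \neq \emptyset} \leq \sum_{e \in P} \rho^*_e$ (which follows from the marginals of $\sigma_I$) together with the constraint $\sum_{P \ni e} f^*_P \leq d_e$:
\begin{align*}
\mathbb{E}[\Phi_R(f^*,S)] \;\geq\; \sum_P \pi^c_P f^*_P - \sum_e \rho^*_e d_e \;=\; \sum_e u_e \eta^*_e,
\end{align*}
where the equality is strong duality. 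Hence $\sigma_R$ is a best response to $\sigma_I$.

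For the interdictor $I$, any deterministic deviation $S \subseteq E$ satisfies
\begin{align*}
\Phi_I(f^*, S) \;\leq\; \sum_{e \in S} \sum_{P \ni e} f^*_P - \sum_{e \in S} d_e \;\leq\; 0,
\end{align*}
using that an intersecting path contributes at least one element to $S$ and again the second primal constraint. Conversely, the hitting-probability bound $\pi_P$ combined with $\sum_{P \ni e} f^*_P \leq u_e$ gives
\begin{align*}
\mathbb{E}[\Phi_I(f^*,S)] \;\geq\; \sum_P \pi^c_P f^*_P - \sum_e u_e \eta^*_e - \sum_e d_e \rho^*_e \;=\; 0
\end{align*}
by strong duality, so $\sigma_I$ is also a best response to $\sigma_R$. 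The whole argument is essentially bookkeeping, with no real obstacle; the only thing to keep straight is that the two primal constraints play symmetric but distinct roles — the capacity constraint $u_e$ pairs with $\eta^*$ to cap $R$'s deviations and to absorb the slack in $I$'s equilibrium payoff, while the interdiction-cost constraint $d_e$ pairs with $\rho^*$ in mirror-image fashion to cap $I$'s deviations and absorb the slack in $R$'s equilibrium payoff.
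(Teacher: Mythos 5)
Your argument is correct and complete: the paper itself states this theorem as imported from Dahan et al.\ without reproducing a proof, and your weak-duality bookkeeping is exactly the standard argument behind it. In particular you handle the one subtle point correctly --- for an arbitrary deviation $f \in F$ of the routing entity only the capacity constraint $\sum_{P \ni e} f_P \leq u_e$ is available, while the constraint $\sum_{P \ni e} f_P^* \leq d_e$ may be used only for the LP-optimal $f^*$ --- and the two chains of inequalities, matched via strong duality $\sum_P \pi^c_P f^*_P = \sum_e (u_e \eta^*_e + d_e \rho^*_e)$, pin both equilibrium payoffs (at $\sum_e u_e \eta^*_e$ for $R$ and $0$ for $I$) against the corresponding deviation bounds, with linearity reducing mixed deviations to pure ones.
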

In particular, note that any feasible solution to $[\text{LP}_I]$ defines marginals $\rho$ that fulfil~\eqref{eq:cond} for $\pi_P := \pi^c_P - \sum_{e \in P} \eta_e = 1 - \sum_{e \in P} c_e + \eta_e$.
Hence, if $(E, \sets, \pi)$ is \eqref{eq:cond}-sufficient for all affine requirements $\pi$, then any pair of optimal solutions to the LPs induces a Nash equilibrium.
If we can moreover efficiently compute optimal solutions to the LPs and the corresponding feasible decompositions, we can efficiently find a Nash equilibrium. 

Dahan et al.~\cite{dahan2021probability} showed that these conditions are met when $\sets$ is the set of $s$-$t$-paths in a DAG. 
Thus, NE for the game can be found efficiently in that setting. 
This positive result is particularly interesting because NE are NP-hard to compute for the variant of the game in which the interdictor is limited by a budget, even when interdiction costs are uniform, transportation costs are zero, and the game is played on a DAG~\cite{disser2020complexity}.

\subsection{Implications of Our Results}

As established by \cref{thm:mfmc}, the systems that fulfil \eqref{eq:cond}-sufficiency for all affine $\pi$ are exactly the systems with the weak MFMC property.
Thus, by \cref{thm:NE-dahan}, the linear programs $[\text{LP}_R]$ and $[\text{LP}_I]$ describe Nash equilibria for security games played in such systems.

If we can moreover efficiently minimize nonnegative linear objectives over a such a system, then we can separate the constraints of $[\text{LP}_I]$, obtaining optimal solutions to both LPs, and, finally, as shown by \cref{thm:mfmc-computation}, we can efficiently compute the corresponding feasible decomposition, and thus compute a Nash equilibrium.
This yields the following theorem.
\begin{theorem}
    There is an algorithm that given an instance of the security game of Dahan et al.~\citep{dahan2021probability} on a set system $(E, \sets)$ with the weak MFMC property, and a separation oracle for $Y_\sets$, computes a Nash equilibrium in time polynomial in $|E|$, the encoding size of $c$, $d$, and $u$ and $T_{\sets}$, the time for a call to the separation oracle.
\end{theorem}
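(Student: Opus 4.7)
\medskip
\noindent
\textbf{Proof proposal.} The plan is to combine \cref{thm:mfmc}, \cref{thm:mfmc-computation}, and \cref{thm:NE-dahan}, with the observation from the paragraph following \cref{thm:mfmc-computation} that a separation oracle for $Y_\sets$ is equivalent to a minimization oracle for nonnegative linear objectives over $(E,\sets)$. The main work is to translate the separation oracle into optimal LP solutions; once these are in hand, \cref{thm:mfmc-computation} hands us the equilibrium.

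First I would solve $[\text{LP}_I]$ via the ellipsoid method. The constraint $\sum_{e \in P}(\eta_e + \rho_e) \geq \pi^c_P$ is equivalent to $\sum_{e \in P}(\eta_e + \rho_e + c_e) \geq 1$, i.e., it asserts $\eta + \rho + c \in Y_\sets$. So a single call to the separation oracle for $Y_\sets$, applied to $y := \eta + \rho + c$, either certifies feasibility of $(\eta,\rho)$ or returns a violated inequality indexed by some $P \in \sets$. The ellipsoid method then yields an optimal solution $(\eta^*,\rho^*)$ in time polynomial in $|E|$, the encoding sizes of $c$, $d$, $u$, and $T_Y$.

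Next I would recover an optimal primal solution $f^*$ to $[\text{LP}_R]$, noting that $[\text{LP}_I]$ is precisely the LP dual of $[\text{LP}_R]$. The standard Gr\"otschel--Lov\'asz--Schrijver machinery applied to the ellipsoid run above produces a polynomial-size set $\sets' \subseteq \sets$ of paths (namely those returned by the separation oracle as violated inequalities during the run) such that the restriction of $[\text{LP}_R]$ to variables $\{f_P : P \in \sets'\}$ has the same optimal value. Solving this restricted LP directly yields $f^* \geq 0$ with polynomially many nonzero entries satisfying $\sum_{P : e \in P} f^*_P \leq u_e$ for all $e$, i.e., $f^* \in F$.

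For the feasible decomposition component, set $\pi_P := \pi^c_P - \sum_{e \in P}\eta^*_e = 1 - \sum_{e \in P}(c_e + \eta^*_e)$. Then $\pi$ is affine with weights $\mu := c + \eta^*$, and feasibility of $(\eta^*,\rho^*)$ in $[\text{LP}_I]$ gives $\sum_{e \in P}\rho^*_e \geq \pi_P$ for every $P \in \sets$, i.e., condition \eqref{eq:cond} holds. Since $(E,\sets)$ has the weak MFMC property, \cref{thm:mfmc-computation} produces a feasible decomposition $\sigma_I$ of $\rho^*$ for $(E,\sets,\pi)$ in time polynomial in $|E|$, the encoding size of $\mu$ and $\rho^*$, and $T_Y$. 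Finally, take $\sigma_R$ to be the Dirac distribution concentrated on $f^* \in F$, which trivially satisfies $\sum_{f \in F}\sigma_{R,f} f_P = f^*_P$. By \cref{thm:NE-dahan}, the pair $(\sigma_R,\sigma_I)$ is a Nash equilibrium.

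The main obstacle is the primal recovery step, since $[\text{LP}_R]$ has exponentially many variables; but the equivalence between separation and optimization already used to solve $[\text{LP}_I]$ also underpins the standard construction of an optimal primal solution with polynomial support from an ellipsoid run on the dual, so no new technical ingredient is required. All remaining steps consist of invoking results already established earlier in the paper.
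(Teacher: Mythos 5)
Your proposal is correct and follows essentially the same route as the paper: translate the constraints of $[\text{LP}_I]$ into membership in $Y_{\sets}$ so the separation oracle drives the ellipsoid method, recover an optimal $f^*$ for $[\text{LP}_R]$ from the constraints generated during that run, apply \cref{thm:mfmc-computation} to decompose $\rho^*$ for the affine requirements $\pi_P = 1-\sum_{e\in P}(c_e+\eta^*_e)$, and conclude via \cref{thm:NE-dahan}. You in fact spell out the primal-recovery step more explicitly than the paper, which only asserts that separating $[\text{LP}_I]$ yields optimal solutions to both LPs.
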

In particular, our results imply efficient computability of NE when $\sets$ is the set of $s$-$t$-paths in a (not necessarily acyclic) digraph, when $\sets$ is an abstract network given by a membership oracle, or when $\sets$ is any of systems with the weak MFMC property mentioned in \cref{sec:intro-mfmc} (arborescenses, $T$-joins, directed cuts).

\subsection{Characterizing All Nash Equilibria}

\citet{dahan2021probability} also identified situations in which the linear programs $[\text{LP}_R]$ and $[\text{LP}_I]$ describe the set of \emph{all} NE for their security game. 
Specifically, they showed that every NE can be constructed from optimal solutions to the LPs as described in \cref{thm:NE-dahan}, if in addition to \eqref{eq:cond}-sufficiency of $(E, \sets, \pi)$ for all affine $\pi$, 
there exists an optimal solution $\eta^*, \rho^*$ to $[\text{LP}_I]$ such that $\rho^*$ has a feasible decomposition $x$ with resepect to~$\pi_P := 1 - \sum_{e \in P} c_e + \eta^*_e$ with $x_{\emptyset} > 0$.

They observed that this latter condition is indeed fulfilled when $(E, \sets)$ is the set of $s$-$t$-paths in a DAG and $c_e > 0$ for all $e \in E$.
Here, we argue that the same is true for arbitrary set systems $(E, \sets)$ with the weak MFMC property.

Assume $c_e > 0$ for all $e \in E$ and let $(\eta^*, \rho^*)$ be an optimal solution to~$[\text{LP}_I]$. Note that we can assume without loss of generality that $\rho^*$ is minimal in the sense that for every $e \in E$ with $\rho^* > 0$, there exists a $P \in \sets$ with $\sum_{f \in P} \eta^*_f + \rho^*_f = \pi^c_P$.
Defining $\mu_e := \eta^*_e + c_e > 0$, the following lemma, which is a consequence of the proof of \cref{thm:mfmc}, shows that such a minimal $\rho^*$ has a decomposition $x$ with $x_{\emptyset} > 0$.

\begin{lemma}\label[lemma]{lem:emptyset}
    Let $(E, \sets)$ be a set system with the weak MFMC property and let $\mu_e \in [0, 1]^E$ with $\mu_e > 0$ for all $e \in E$.
    If $\rho \in [0, 1]$ fulfils \eqref{eq:cond} for $(E, \sets, \pi)$ with $\pi_P := 1 - \sum_{e \in P} \mu_e$ and if for every $e \in E$ with $\rho_e > 0$ there is $P \in \sets$ with $e \in P$ and $\sum_{f \in P} \rho_f = \pi_P$, then there is a feasible decomposition $x$ of $\rho$ with respect to $(E, \sets, \pi)$ such that $x_{\emptyset} > 0$.
\end{lemma}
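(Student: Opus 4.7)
The plan is to adapt the probabilistic construction from the proof of \cref{thm:mfmc}, exploiting the minimality of $\rho$ together with the strict positivity of $\mu$ to ensure the resulting random set equals $\emptyset$ with positive probability.

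First I would establish that $y := \rho + \mu$ lies in $[0,1]^E$. For any $e$ with $\rho_e > 0$, the minimality hypothesis supplies $P \in \sets$ with $e \in P$ and $\sum_{f \in P} \rho_f + \mu_f = 1$, which yields $y_e \leq 1$. For $\rho_e = 0$ we have $y_e = \mu_e \leq 1$ trivially.

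Next I would refine the convex decomposition $y = \sum_{S \in \mathcal{S}} \lambda_S \mathbbm{1}_S + r$ (with $\mathcal{S}$ the set of transversals, $\lambda \geq 0$, $\sum_S \lambda_S = 1$, $r \geq 0$), which the weak MFMC property provides as in the proof of \cref{thm:mfmc}, to one with $r = 0$. The crucial identity $\sum_{S \not\ni e} \lambda_S = 1 - y_e + r_e \geq r_e$, which follows from $y \leq 1$, shows that whenever $r_e > 0$ there is enough weight to redistribute: for any $S \not\ni e$ with $\lambda_S > 0$, shifting weight from $\lambda_S$ to $\lambda_{S \cup \{e\}}$ (still a transversal, since supersets of transversals are transversals) increases $\sum_S \lambda_S \mathbbm{1}_S$ only in coordinate $e$, allowing us to decrease $r_e$ correspondingly while leaving all other $r_{e'}$ unchanged. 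Iterating over $e$ drives $r$ to zero.

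With $r = 0$ in hand, I would then follow the construction from the proof of \cref{thm:mfmc} verbatim: draw $T_1$ with $\prob{T_1 = S} = \lambda_S$, form $T_2 := \{e \in E \st \tau < \rho_e/y_e\}$ for an independent $\tau \sim U[0,1]$ (noting $y_e \geq \mu_e > 0$), and set $T := T_1 \cap T_2$. The same calculations as in \cref{thm:mfmc} yield $\prob{e \in T} = y_e \cdot \rho_e/y_e = \rho_e$ exactly (so the padding step via \cref{lem:feasible-min} becomes unnecessary) and $\prob{P \cap T \neq \emptyset} \geq 1 - \sum_{e \in P} \mu_e = \pi_P$ for every $P \in \sets$.

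The final step leverages $\mu > 0$: since $\rho_e/y_e = \rho_e/(\rho_e + \mu_e) < 1$ for every $e$, the quantity $M := \max_{e \in E} \rho_e/y_e$ is strictly less than $1$. Any $\tau \in (M, 1]$ forces $T_2 = \emptyset$ and hence $T = \emptyset$, so $x_\emptyset := \prob{T = \emptyset} \geq 1 - M > 0$, and $x_S := \prob{T = S}$ is the desired feasible decomposition. The main technical step, and the point where both hypotheses are genuinely used, is the refinement to a remainder-free decomposition in the second step: the generic construction in the proof of \cref{thm:mfmc} may produce some $r_e > 0$, in which case the threshold probability $\rho_e/(y_e - r_e)$ can reach $1$ and block $T_2$ from ever being empty.
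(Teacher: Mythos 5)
Your proof is correct, and it follows the same overall construction as the paper (decompose $y = \rho + \mu$ over transversals via the weak MFMC property, draw $T_1$ from the convex weights, thin it by an independent threshold set $T_2$, and use $\mu_e > 0$ to make $\tau > \max_e \rho_e/(\rho_e+\mu_e)$ force $T_2 = \emptyset$), but it handles the ray term $r$ differently, and that is precisely the step where the lemma's extra hypotheses enter. The paper keeps the Carath\'eodory decomposition untouched and argues locally: for each $e$ with $\rho_e > 0$ the tight path $P$ gives $\sum_{f \in P} \mu_f + \rho_f = 1$, and since $y - r \in Y_{\sets}$ this forces $r_f = 0$ for all $f \in P$; hence the threshold $\rho_e/(y_e - r_e) = \rho_e/(\rho_e+\mu_e) < 1$ exactly on the coordinates where it matters, while coordinates with $\rho_e = 0$ contribute threshold $0$. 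You instead use the tight paths only to get $y \leq 1$, and then eliminate $r$ \emph{globally} by the shifting argument based on $\sum_{S \not\ni e} \lambda_S = 1 - y_e + r_e \geq r_e$ and upward-closedness of the set of transversals; this is a valid and slightly more structural route (it isolates the reusable fact that any $y \in Y_{\sets} \cap [0,1]^E$ is a convex combination of transversal incidence vectors with no residual ray), and it buys you $\prob{e \in T} = \rho_e$ exactly, so you can skip both the padding via \cref{lem:feasible-min} and the paper's short argument that $\rho' = \rho$. The paper's version is a bit leaner in that it never modifies the decomposition and also records $\rho_e < 1$ along the way, but both arguments use $\mu > 0$ identically in the final step, and your identification of the danger case ($r_e > 0$ pushing a threshold to $1$) is exactly the issue the paper's local argument is designed to rule out.
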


\begin{proof}
As in the proof of \cref{thm:mfmc}, let $y := \mu + \rho$ and consider an arbitrary representation of $y$ as a convex combination of extreme points of~$Y_{\sets}$ and a ray $r \geq 0$ of $Y_{\sets}$, i.e., 
$y = \sum_{S \in \mathcal{S}} \lambda_S \mathbbm{1}_S + r$ for some $\lambda \in [0, 1]^{\mathcal{S}}$ with $\sum_{S \in \mathcal{S}} \lambda_S = 1$ and some $r \in \mathbb{R}_+^E$.

We first show that $\rho_e > 0$ implies $\rho_e < 1$ and $r_e = 0$.
Indeed, by the premise of the lemma, there is $P \in \sets$ with $e \in P$ such that $\sum_{f \in P} \rho_f = 1 - \sum_{f \in P} \mu_f$.
Thus $\rho_e < \mu_e + \rho_e \leq \sum_{f \in P} \rho_f + \mu_f = 1$.
Moreover, because $y - r \in Y_{\sets}$, it holds that $\sum_{f \in P} \mu_f + \rho_f - r_f \geq 1$, which implies $r_f = 0$ for all $f \in P$ and thus in particular $r_e = 0$.

Now let $T_1$, $T_2$, and $T = T_1 \cap T_2$ be the random sets constructed in the proof of \cref{thm:mfmc}.
We show that $\prob{T_2 = \emptyset} = 1 - \max_{e \in E : y_e > r_e} \frac{\rho_e}{y_e - r_e} > 0$.
Indeed, let $e \in E$ with $y_e > r_e$ and note that if $\rho_e > 0$, then $r_e = 0$ and hence $\frac{\rho_e}{y_e - r_e} = \frac{\rho_e}{\mu_e + \rho_e} < 1$ because $\mu_e > 0$.

As shown in the proof of \cref{thm:mfmc}, $x_S := \prob{T = S}$ defines a feasible decomposition of the marginal vector $\rho'$ defined by $\rho'_e := \prob{e \in T} \leq \rho_e$. Moreover, because for every $e \in E$ with $\rho_e > 0$ there is $P \in \sets$ with $e \in P$ and $\sum_{f \in P} \rho'_f \leq \sum_{f \in P} \rho_f = \pi_P$, we concluded that $\rho = \rho'$.
Finally, note that $x_{\emptyset} \geq \prob{T_2 = \emptyset} > 0$ by the observation above.
\qed
\end{proof}

Hence the LPs $[\text{LP}_R]$ and $[\text{LP}_I]$ indeed describe all Nash equilibria of the game when $c_e > 0$ for all $e \in E$.
We conjecture this is still true under a weaker assumption, namely that $\pi^c_P < 1$ for all $P \in \sets$. However, it seems that this requires a different construction of the decomposition than the one given in \cref{thm:mfmc}.

\section{The Conservation Law for Partially Ordered Sets}
\label{sec:conservation-law}

As discussed in \cref{sec:introduction}, Dahan et al.~\cite{dahan2021probability} established the sufficiency of \eqref{eq:cond} in partially ordered sets not only for the case of affine requirements~\eqref{eq:affine} but also for the case where requirements fulfil the conservation law~\eqref{eq:conservation}. However, they left it open whether it is possible to efficiently compute the corresponding decompositions in the latter case.
In this section, we resolve this question by showing that the conservation law~\eqref{eq:conservation} for maximal chains in a poset can be reduced to the case of affine requirements~\eqref{eq:affine} in the corresponding Hasse diagram\footnote{The Hasse diagram of a poset $(E, \preceq)$ is a directed acyclic graph whose nodes are the elements of $E$ and whose maximal paths correspond to the maximal chains of~$\preceq$. See \cref{sec:hasse} for details on this transformation and why it is necessary.}, for which a feasible decomposition then can be computed efficiently.

\begin{restatable}{theorem}{restateThmConservationAffine}\label{thm:conservation-affine}
Let $D = (V, A)$ be a DAG, let $s, t \in V$, and let $\sets \subseteq 2^{V \cup A}$ be the set of $s$-$t$-paths in $D$.
Let $\pi \in [0, 1]^\sets$ such that \eqref{eq:conservation} is fulfilled.
Then there exists $\mu \in [0,1]^{V \cup A}$ such that $\pi_P = 1 - \sum_{e \in P} \mu_e$. 
Furthermore, $\mu$ can be computed in strongly polynomial time in $|V|$ and $|A|$, and $T_{\pi}$, where $T_{\pi}$ denotes the time for a call to an oracle that, given $P \in \sets$, returns $\pi_P$.
\end{restatable}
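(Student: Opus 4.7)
My plan is to construct $\mu$ explicitly from $\pi$-values of a small collection of canonical paths, using the conservation law to collapse the global structure of $\pi$ into per-arc contributions. For each node $v \in V$ that lies on some $s$-$t$-path (others can be discarded) fix an arbitrary $s$-$v$-path $\sigma_v$, and let $\tau_v$ denote a $v$-$t$-path to be fixed shortly; define $\phi_v := \pi_{\sigma_v \cdot \tau_v}$ and $\psi_a := \pi_{\sigma_u \cdot a \cdot \tau_v}$ for each arc $a = (u,v)$. By iteratively applying \eqref{eq:conservation} at the internal nodes of an $s$-$t$-path $P = s, a_1, v_1, \ldots, a_k, t$---each application relating a hybrid of $P$ and canonical suffixes $\tau_{v_i}$ to the canonical path $\sigma_{v_i} \cdot \tau_{v_i}$---I expect to obtain, by induction on $k$, the decomposition
\[\pi_P \;=\; \sum_{i=1}^k \psi_{a_i} \;-\; \sum_{i=1}^{k-1} \phi_{v_i}.\]

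Guided by this formula, I set $\mu_s := 1 - \phi_s$, $\mu_v := 0$ for all $v \in V \setminus \{s\}$, and $\mu_a := \phi_u - \psi_a$ for each $a = (u, v) \in A$. A direct substitution into $\sum_{e \in P} \mu_e$ using the decomposition above confirms $\sum_{e \in P} \mu_e = 1 - \pi_P$ for every $P \in \sets$. To ensure $\mu \in [0, 1]^{V \cup A}$, I would choose each $\tau_v$ to maximize $\pi_{\sigma_v \cdot \tau_v}$ over all $v$-$t$-paths. Then for any arc $a = (u, v)$ the concatenation $a \cdot \tau_v$ is itself a valid $u$-$t$-path, so maximality of $\tau_u$ gives $\phi_u \geq \psi_a$, hence $\mu_a \geq 0$; the bounds $\mu_a \leq \phi_u \leq 1$ and $\mu_s = 1 - \phi_s \in [0, 1]$ follow immediately from $\pi \in [0, 1]^{\sets}$.

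The main obstacle will be computing these maximizing $\tau_v$ in polynomial time, since the number of $v$-$t$-paths can be exponential. The key is that \eqref{eq:conservation} makes the maximization decompose across nodes: applying conservation at $v$ to the paths $\sigma_v \cdot a \cdot \tau_w$ and $\sigma'_v \cdot a' \cdot \tau_{w'}$ shows that the ordering among outgoing choices $(a, \tau_w)$ at $v$ is independent of the prefix $\sigma_v$, while applying it at $w$ to $\sigma_v \cdot a \cdot \tau_w$ and $\sigma_w \cdot \tau'_w$ shows that the optimal continuation $\tau_w$ (from any prefix reaching $w$) agrees with the maximizer of $\pi_{\sigma_w \cdot \tau_w}$. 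This enables a reverse-topological dynamic program: initialize $\tau_t := (t)$ and, for each $v$ in reverse topological order, query the oracle for $\pi_{\sigma_v \cdot a \cdot \tau_w}$ on each outgoing arc $a = (v, w)$ (using the previously computed $\tau_w$), then set $\tau_v := a^* \cdot \tau_{w(a^*)}$ for a maximizing $a^*$. Each $\psi_a$ and $\phi_v$ arises as a byproduct of these queries, so the full construction uses $O(|A|)$ oracle calls plus polynomial arithmetic in $|V|$ and $|A|$, yielding the claimed strongly polynomial bound.
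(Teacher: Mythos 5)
Your construction is correct, and it takes a genuinely different route from the paper. You prove existence constructively: the telescoping identity $\pi_P=\sum_i\psi_{a_i}-\sum_i\phi_{v_i}$ does follow by applying \eqref{eq:conservation} at $v_{j-1}$ to the pair $[P,v_{j-1}]\cdot\tau_{v_{j-1}}$ and $\sigma_{v_{j-1}}\cdot a_j\cdot\tau_{v_j}$ (all concatenations are simple paths since $D$ is acyclic), your prefix-independence argument for optimal continuations is exactly the conservation law applied at $w$, so the reverse-topological choice of maximizing suffixes is valid, and nonnegativity of $\mu_a=\phi_u-\psi_a$ then follows from maximality of $\tau_u$. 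The paper instead proves existence non-constructively via Farkas' lemma, with the key step being an exchange lemma (\cref{lem:farkas-transform}) showing monotonicity of $\bar{\pi}$ under coordinatewise domination of the induced edge loads, and obtains the algorithm by extracting a basis of path incidence vectors (\cref{lem:computing-mu}) and solving the resulting system with Tardos' strongly polynomial LP algorithm. Your approach is closer in spirit to the explicit linearization of \citet{cela2023linear} discussed in \cref{sec:quadratic-mu}, except that you fold the ``potential shift'' needed for nonnegativity directly into the choice of $\pi$-maximizing suffixes instead of first computing a possibly negative linearization $\mu'$ and then correcting it by shortest-path distances; this buys a self-contained, elementary argument and a purely combinatorial algorithm with $\mathcal{O}(|A|)$ oracle calls, whereas the paper's route avoids any explicit formula but leans on general LP machinery. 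Two cosmetic points to tighten: state the induction explicitly rather than ``I expect to obtain'' (it is a one-line conservation step plus telescoping, with $\phi_s=\pi_{P_0}$ cancelling), and assign $\mu_e:=0$ to nodes and arcs lying on no $s$-$t$-path so that $\mu$ is defined on all of $V\cup A$ as the theorem requires.
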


Independently from the present work, a result equivalent to \cref{thm:conservation-affine} was recently proven by \citet[Theorem~1]{cela2023linear} in the context of characterizing linearizable instances of higher-order shortest path problems.
Their proof is based on an inductive (along a topological order in the DAG) description of the weights $\mu$, which yields an algorithm for computing $\mu$ in time~$\mathcal{O}(|A|^2 + |A| T_{\pi})$, as we describe in detail in \cref{sec:quadratic-mu} below.

\begin{proof}[\cref{thm:conservation-affine}] 
Consider the system 
    \begin{alignat}{3}
        \textstyle \sum_{e \in P} \mu_e & = 1 - \pi_P &\qquad \forall\, P \in \sets,\label{eq:affine-transform}\\
        \mu_e & \geq 0 &\qquad \forall\, e \in E := V \cup A,\label{eq:affine-nonneg}
    \end{alignat}
and note that any feasible solution $\mu$ to this system satisfies the conditions established in the theorem, as $\mu_e \leq 1$ for all $e \in E$ is implied by \eqref{eq:affine-transform} and \eqref{eq:affine-nonneg}.

For any vector $y \in \mathbb{R}^{\sets}$, define
\begin{align*}
    \textstyle \bar{\pi}(y) := \sum_{P \in \sets} (1 - \pi_P) y_P \quad \text{ and } \quad y_e  := \sum_{P \in \sets : e \in P} y_P \text{ for } e \in E.
\end{align*}
Using this notation, Farkas' lemma states that system \eqref{eq:affine-transform} and \eqref{eq:affine-nonneg} has a feasible solution if and only if 
\begin{align}
    \bar{\pi}(y) \geq 0
    \text{ for all } y \in \mathbb{R}^{\sets} \text{ with } y_e \geq 0 \text{ for all } e \in E. \label{eq:farkas}
\end{align}

To show that \eqref{eq:farkas} is fulfilled, we use the following lemma, which we prove in \cref{sec:existence-mu}.

\begin{lemma}\label[lemma]{lem:farkas-transform}
    Let $z, z' \in \mathbb{R}_+^{\sets}$ with $z_e \geq z'_e$ for all $e \in E$.
    Then $\bar{\pi}(z) \geq \bar{\pi}(z')$.
\end{lemma}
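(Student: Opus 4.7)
The plan is to turn the element-wise comparison $z_e \geq z'_e$ into a path-wise comparison by introducing an auxiliary vector $z'' \in \mathbb{R}_+^{\sets}$ with $z''_e = z_e$ for all $e \in E$ and $z'' \geq z'$ componentwise on $\sets$. The conservation law \eqref{eq:conservation} will then give $\bar{\pi}(z) = \bar{\pi}(z'')$, while $\bar{\pi}(z'') \geq \bar{\pi}(z')$ is immediate from $\pi_P \leq 1$.

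First I would observe that $\Delta \in \mathbb{R}_+^{E}$ defined by $\Delta_e := z_e - z'_e$ is a nonnegative $s$-$t$ flow in the DAG $D$. Flow conservation at any internal vertex $v$ holds because each $s$-$t$ path through $v$ uses exactly one incoming and one outgoing arc at $v$, so $z_v$ equals both $\sum_{a = (u, v) \in A} z_a$ and $\sum_{a = (v, w) \in A} z_a$, and likewise for $z'$. Since $D$ is acyclic, $\Delta$ admits a path decomposition: there exists $w \in \mathbb{R}_+^{\sets}$ with $w_e = \Delta_e$ for every $e \in E$. Setting $z'' := z' + w$, we get $z''_e = z_e$ for all $e \in E$, and $\bar{\pi}(z'') = \bar{\pi}(z') + \bar{\pi}(w) \geq \bar{\pi}(z')$ because $w_P \geq 0$ and $\pi_P \leq 1$ yield $\bar{\pi}(w) \geq 0$.

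It remains to show $\bar{\pi}(z) = \bar{\pi}(z'')$, which I would derive from the sublemma that $\bar{\pi}(y) = \bar{\pi}(y')$ whenever $y, y' \in \mathbb{R}_+^{\sets}$ satisfy $y_e = y'_e$ for all $e \in E$. Equivalently, $\bar{\pi}$ vanishes on the kernel $K$ of the element-path incidence map $y \mapsto (y_e)_{e \in E}$. The conservation law directly implies that every swap vector $\delta^{P, Q, e} := \mathbbm{1}_P + \mathbbm{1}_Q - \mathbbm{1}_{P \times_e Q} - \mathbbm{1}_{Q \times_e P}$, for $P, Q \in \sets$ and $e \in P \cap Q$, lies in $K$ and satisfies $\bar{\pi}(\delta^{P, Q, e}) = 0$; here acyclicity of $D$ is what guarantees $\mathbbm{1}_{P \times_e Q} + \mathbbm{1}_{Q \times_e P} = \mathbbm{1}_P + \mathbbm{1}_Q$, since every common element of $P$ and $Q$ lies topologically on a consistent side of $e$ in both paths.

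The hard part will be showing that the swap vectors in fact span $K$. My plan is to induct on $\|\delta\|_1 = \sum_P |\delta_P|$ for $\delta \in K$: if $\delta \neq 0$, zero element coverage forces the coexistence of $P$ with $\delta_P > 0$ and $Q$ with $\delta_Q < 0$, and walking through the support of $\delta$ in topological order produces some intermediate element $e \in P \cap Q$---possibly after preparatory swaps involving a third path in the support of $\delta$ to create the required intersection---at which subtracting a suitable multiple of $\delta^{P, Q, e}$ strictly reduces $\|\delta\|_1$. A cleaner but equivalent alternative would be to invoke the folklore statement that any two nonnegative path decompositions of a fixed $s$-$t$ flow in a DAG are connected by a finite sequence of elementary swaps, which yields the sublemma directly.
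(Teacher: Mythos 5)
Your first step is sound: since $E=V\cup A$ and every path through an internal node uses exactly one arc in and one arc out, the marginal difference $\Delta_e=z_e-z'_e\geq 0$ does satisfy flow conservation, and acyclicity gives a decomposition $w\in\mathbb{R}_+^{\sets}$ with $w_e=\Delta_e$; then $\bar{\pi}(z'')=\bar{\pi}(z')+\bar{\pi}(w)\geq\bar{\pi}(z')$ is immediate from $\pi_P\leq 1$. This reduction of the inequality to the equal-marginals case is a genuinely different framing from the paper, which instead proves the monotonicity statement directly by induction on $|\operatorname{supp}(z')|$. But everything now hinges on your sublemma that $\bar{\pi}$ is constant on vectors with equal element marginals, i.e.\ that the swap vectors $\delta^{P,Q,e}$ span the kernel $K$ of the path--element incidence map, and that is exactly where the difficulty of \cref{lem:farkas-transform} lives: it is equivalent to the linearizability core of \cref{thm:conservation-affine} (existence of a possibly negative $\mu$), so it cannot be taken for granted, and neither of your two justifications closes it.

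Concretely, the induction on $\|\delta\|_1$ does not work as stated: subtracting $\lambda\,\delta^{P,Q,e}$ with $\lambda=\min\{\delta_P,-\delta_Q\}$ removes $2\lambda$ from $|\delta_P|+|\delta_Q|$ but can add up to $2\lambda$ to $|\delta_{P\times_e Q}|+|\delta_{Q\times_e P}|$ (and the only guaranteed common elements $s,t$ give the trivial swap $\delta^{P,Q,s}=0$), so $\|\delta\|_1$ need not strictly decrease; the parenthetical ``possibly after preparatory swaps involving a third path'' is precisely the nontrivial combinatorial argument that is missing. The paper supplies it: among the minimizers of $\bar{\pi}$ it picks one maximizing the agreement potential $\sum_{a\in P\cap A}q_z(a)$, locates the vertex $v$ where agreement with $P$ first drops, and performs a swap there to contradict maximality -- an argument that also sidesteps termination issues with real-valued weights, which your measure-decreasing scheme would face. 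The fallback of citing as ``folklore'' that any two nonnegative path decompositions of a DAG flow are connected by finitely many elementary swaps is not an established quotable fact at this level of generality (real weights, finite sequence), and proving it is essentially the same exchange argument; moreover you need the swaps to stay within nonnegative decompositions or else you are back to proving the span statement for $K$. So the proposal is a reasonable plan whose hard kernel-spanning step is left unproven, and that step is the substance of the lemma.
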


Let $y \in \mathbb{R}^{\sets}$ with $y_e \geq 0$ for all $e \in E$. We define $z, z' \in \mathbb{R}_+^{\sets}$ by setting $z_P := \max \{y_P, 0\}$ and $z'_P := \max \{-y_P, 0\}$ for $P \in \sets$. Note that $z_e \geq z'_e$  for all $e \in E$ because $z_e - z'_e = y_e \geq 0$. Hence \cref{lem:farkas-transform} applied to $z$ and $z'$ yields $\bar{\pi}(y) = \bar{\pi}(z) - \bar{\pi}(z') \geq 0$. This proves \eqref{eq:farkas} and thus the existence of $\mu$.

To show that $\mu$ can be computed in polynomial time, it suffices to determine a basis of the rows of the system~\eqref{eq:affine-transform}.
The following lemma, which we prove in \cref{sec:computing-mu}, shows how to obtain such a basis efficiently.

\begin{lemma}\label[lemma]{lem:computing-mu}
  There is an algorithm that, given a DAG \mbox{$D = (V, A)$} and two nodes $s, t, \in V$, computes in strongly polynomial time a set $\bar{\sets}$ of $s$-$t$-paths in~$D$ such that the incidence vectors of the paths in~$\bar{\sets}$ are a basis of the subspace of~$\mathbb{R}^{V \cup A}$ spanned by the incidence vectors of all $s$-$t$-paths in~$D$.
\end{lemma}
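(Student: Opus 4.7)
My plan is to build $\bar{\sets}$ greedily, adding one path at a time, each guaranteed to be linearly independent from the paths already selected, until no further extension is possible. Let $L^{(k)} := \mathrm{span}\{\mathbbm{1}_Q : Q \in \bar{\sets}^{(k)}\}$ denote the subspace obtained after $k$ iterations. I maintain a basis $w_1, \dots, w_r$ of its orthogonal complement $(L^{(k)})^{\perp} \subseteq \mathbb{R}^{V \cup A}$, which is updated by Gaussian elimination each time a new path is added; this bookkeeping is strongly polynomial since the input incidence vectors are $0/1$-valued and the ambient dimension is $|V| + |A|$.

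The key observation is that an $s$-$t$-path $P$ satisfies $\mathbbm{1}_P \notin L^{(k)}$ if and only if $\sum_{e \in P} w_e \neq 0$ for some $w \in (L^{(k)})^{\perp}$; by linearity, this reduces to checking the $r$ basis vectors $w_i$. For each $w_i$, I compute both $\min_{P \in \sets} \sum_{e \in P} (w_i)_e$ and $\max_{P \in \sets} \sum_{e \in P} (w_i)_e$ by standard dynamic programming along a topological ordering of $D$; this solves the shortest- and longest-$s$-$t$-path problem in the DAG with arbitrary (possibly negative) node and arc weights in strongly polynomial time. If either extremum is nonzero for some $i$, the corresponding optimal path $P^*$ satisfies $\mathbbm{1}_{P^*} \notin L^{(k)}$, and I add $P^*$ to $\bar{\sets}$.

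If instead every $w_i$ yields value $0$ on all $s$-$t$-paths, then the same holds for every $w \in (L^{(k)})^{\perp}$ by linearity, which means every $s$-$t$-path incidence vector lies in $(L^{(k)})^{\perp\perp} = L^{(k)}$. Since the paths in $\bar{\sets}$ are linearly independent by construction, $\bar{\sets}$ is then a basis of the desired subspace. Termination after at most $|V| + |A|$ iterations follows because $\dim L^{(k)}$ strictly increases whenever a new path is added, and each iteration runs in strongly polynomial time, giving the claimed overall complexity.

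I do not expect a serious obstacle here: the plan combines a standard greedy linear-algebraic subroutine with two strongly polynomial primitives (Gaussian elimination on $0/1$ inputs, and DAG shortest/longest paths via topological sort). The only point worth verifying carefully is that the orthogonal-complement test is both sound (no independent path is overlooked) and complete (the algorithm only terminates once the span is genuinely saturated); both follow from the finite-dimensional identity $(L^{(k)})^{\perp\perp} = L^{(k)}$ together with the fact that an $s$-$t$-path optimum of a linear objective in a DAG is attained at some extremal path.
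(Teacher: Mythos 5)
Your proposal is correct, but it proves the lemma by a genuinely different route than the paper. You build $\bar{\sets}$ greedily, maintaining a basis of the orthogonal complement of the current span and using the identity $(L^{(k)})^{\perp\perp} = L^{(k)}$: a new independent path exists iff some complement basis vector $w_i$ has a nonzero value $\sum_{e \in P} (w_i)_e$ on some $s$-$t$-path, which you detect by computing shortest and longest $s$-$t$-paths in the DAG with the (possibly negative) weights $w_i$ via dynamic programming along a topological order. This is sound and complete exactly as you argue (a nonzero extremum yields a path not orthogonal to $w_i$, hence outside $L^{(k)}$; zero min and max for all $i$ forces every path vector into $L^{(k)}$), the dimension bound $|V|+|A|$ caps the number of iterations, and Gaussian elimination on $0/1$ inputs keeps all intermediate numbers of polynomial size, so the whole procedure is strongly polynomial. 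The paper instead gives a direct, purely combinatorial construction: it restricts to the subgraph of elements lying on some $s$-$t$-path, takes an $s$-rooted arborescence $T$ spanning it, and outputs the tree path $P_0 = T[t]$ together with one path $P_b$ per non-tree arc $b=(v,w)$ (the tree path to $v$, then $b$, then any $w$-$t$-path); linear independence follows from an ordering of the non-tree arcs, and the count matches the upper bound $\dim U \leq |B|+1$ obtained by embedding path differences into the circulation space of dimension $|A|-|V|+1$. The paper's construction is simpler and yields an explicit basis of known size in essentially one graph traversal plus an arborescence computation, whereas your method is more generic: it only needs the ability to optimize arbitrary linear objectives over $\sets$ (which here exploits acyclicity, since longest path is used), and would transfer to other set systems admitting such an oracle, at the cost of the linear-algebra bookkeeping.
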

Given the basis, a solution $\mu$ fulfilling \eqref{eq:affine-transform} and \eqref{eq:affine-nonneg} can be computed in strongly polynomial time, e.g., using the algorithm of \citet{tardos1986strongly} for linear programs with binary constraint matrices. 
This completes the proof of the theorem.
\qed
\end{proof}

\subsection{Proof of \cref{lem:farkas-transform} (Farkas Conditions)}
\label{sec:existence-mu}

\begin{proof}[\cref{lem:farkas-transform}]
Let $z' \in \mathbb{R}_+^{\sets}$. 
Let $$X_{z'} := \{z \in \mathbb{R}_+^{\sets} : z_e \geq z'_e \text{ for all } e \in E\}.$$ 
We show that $\min_{z \in X_{z'}} \bar{\pi}(z) \geq \bar{\pi}(z')$, which proves the lemma.
We show this by induction on the cardinality of $$\operatorname{supp}(z') := \{P \in \sets : z'_P > 0\}.$$ 

To start the induction, note that if $\operatorname{supp}(z') = \emptyset$, then $\bar{\pi}(z') = 0 \leq \bar{\pi}(z)$ for all $z \in \mathbb{R}_+^{\sets}$, because $1 - \pi_P \geq 0$ for all $P \in \sets$.
To complete the induction, we make use of the following claim, which we prove further below.
\begin{myclaim}
    There is $z \in \operatorname{argmin}_{\bar{z} \in X_z'} \bar{\pi}(\bar{z})$ such that $z_P \geq z'_P > 0$ for some $P \in \sets$.
\end{myclaim}
Using the claim, we can construct $\hat{z}, \hat{z}' \in \mathbb{R}_+^{\sets}$ by $\hat{z}_P := z_P - z'_P \geq 0$, $\hat{z}'_P := 0$, and $\hat{z}_{\bar{P}} := z_{\bar{P}}$, $\hat{z}'_{\bar{P}} := z'_{\bar{P}}$ for $\bar{P} \in \sets \setminus \{P\}$.
Note that $\hat{z}_e \geq \hat{z}'_e$ for all $e \in E$ and 
 $\operatorname{supp}(\hat{z}') = \operatorname{supp}(z') \setminus \{P\}$ by construction. Hence
\begin{align*}
    \bar{\pi}(z) - (1 - \pi(P)) z'_P = \bar{\pi}(\hat{z}) \geq \bar{\pi}(\hat{z}') = \bar{\pi}(z') - (1 - \pi(P)) z'_P
\end{align*}
by induction hypothesis. We conclude that $\bar{\pi}(z) \geq \bar{\pi}(z')$, which completes the induction, as $z$ is a minimizer of $\bar{\pi}$ in $X_{z'}$.

To complete the proof of the lemma, it remains to prove the claim.
For this, fix an arbitrary $P \in \sets$ with $z'_P > 0$. 
For any $a \in P \cap A$ and any $z \in X_{z'}$, define 
    $\mathcal{Q}(a) := \{Q \in \sets : a \in Q \text{ and } [Q, a] = [P, a]\}$ and $q_z(a) := \sum_{Q \in \mathcal{Q}(a)} z_P$.
Let $z$ be a maximizer of $\sum_{a \in P \cap A} q_{z}(a)$ among all $z \in \operatorname{argmin}_{\bar{z} \in X_{z'}} \bar{\pi}(\bar{z})$. 

We will show that $z_P \geq z'_P$. 
By contradiction assume that this is not the case.
Let $a = (u, v) \in P \cap A$ be the last arc on $P$ such that $q_{z}(a) \geq z'(P)$,
and let $a' = (v, w) \in P \cap A$ be the first arc on $P$ such that $q_{z}(a) < z'(P)$ (note that $a$ and $a'$ exist because the first arc of $P$ fulfils $q_{z}(a) \geq z'_P$ but the last arc of $P$ does not, as $z_P < z'_P$).

Let $Q \in \mathcal{Q}(a)$ with $a' \notin Q$ and $z_Q > 0$
and let $R \in \sets \setminus \mathcal{Q}(a)$ with $a' \in R$ and $z_R > 0$ (note that $Q$ exists because $q_{z}(a') < z'_P \leq q_{z}(a)$ by choice of $a$ and $a'$, and $R$ exists because $z_a \geq z'_P > q_{z}(a)$).
By construction, $v \in Q \cap R$. Thus, let $\bar{Q} := Q \times_v R$, $\bar{R} := R \times_v Q$ and $\varepsilon := \min \{z_Q, z_R\} > 0$. We construct $\bar{z}$ by 
\begin{align*}
    \bar{z}(\bar{P}) := \begin{cases}
            z_{\bar{P}} + \varepsilon & \text{ if } \bar{P} \in \{\bar{Q}, \bar{R}\},\\
            z_{\bar{P}} - \varepsilon & \text{ if } \bar{P} \in \{Q, R\},\\
            z_{\bar{P}} & \text{ otherwise}.
        \end{cases}
\end{align*}
Note that $\bar{z} \in \mathbb{R}_+^{\sets}$ by choice of $\varepsilon$. Furthermore, $\bar{z}_e = z_e$ for all $e \in E$, because the graph is acyclic and hence $\bar{Q} = [Q, v] \cup [v, R]$ and $\bar{R} = [R, v] \cup [v, Q]$.
Finally, observe that $\bar{\pi}(\bar{z}) = \bar{\pi}(z)$ by \eqref{eq:conservation}. 
These three observations imply that $\bar{z}$ is contained in $X_{z'}$ and it is a minimizer of $\bar{\pi}$.
Moreover, $q_{\bar{z}}(\bar{a}) \geq q_z(\bar{a})$ for all $\bar{a} \in P \cap A$
because $Q \in \mathcal{Q}(\bar{a})$ implies $\bar{Q} \in \mathcal{Q}(\bar{a})$ and $R \in \mathcal{Q}(\bar{a})$ implies $\bar{R} \in \mathcal{Q}(\bar{a})$, respectively.
Furthermore, $\bar{Q} \in \mathcal{Q}(a')$ by construction. Because $Q, R \notin \mathcal{Q}(a')$, this implies $q_{\bar{z}}(a') > q_{z}(a')$.
We conclude that $\sum_{\bar{a} \in P \cap A} q_{\bar{z}}(\bar{a}) > \sum_{\bar{a} \in P \cap A} q_{z}(\bar{a})$, contradicting our choice of $z$. We have thus shown that $z_P \geq z'_P$, completing the proof of the claim.
\qed
\end{proof}

\subsection{Proof of \cref{lem:computing-mu} (Determining a Basis)}
\label{sec:computing-mu}

\begin{proof}[\cref{lem:computing-mu}]
  Let $\bar{V} \subseteq V$ be the set of nodes such that there is both an $s$-$v$-path and a $v$-$t$-path in $D$, and let $\bar{A} := A[\bar{V}]$ be the set of arcs with both endpoints in $\bar{V}$. Note that the set of $s$-$t$-paths in $D$ is the same as the subgraph $\bar{D} = (\bar{V}, \bar{A})$.
  
  Let $T \subseteq \bar{A}$ be an arborescence rooted at $s$ and spanning $\bar{V}$ (such an arborescence exists by construction of $\bar{V}$). For $v \in \bar{V}$, let $T[v]$ denote the unique $s$-$v$-path in $T$.
  Let $B := \bar{A} \setminus T$. For $b = (v, w) \in B$, let $P_b$ be an $s$-$t$-path resulting from the concatenation of $T[v] \cup {b}$ with an arbitrary $w$-$t$-path in $\bar{D}$ (note that such a concatenation indeed results in a simple path as the graph is acyclic).
  Define $P_0 := T[t]$ and $\bar{\sets} := \{P_0\} \cup \{P_b \st b \in B\}$.
  We claim $\bar{\sets}$ is indeed the desired set of paths.
  
  For any $P \in \sets$, let $\mathbbm{1}_P \in \mathbb{R}^A$ denote the arc-incidence vector of $P$.
  Let $$U := \operatorname{span} \{\mathbbm{1}_P \st P \in \sets\} \text{ and } C := \operatorname{span} \{u \in \mathbb{R}^{A} \st u \text{ is a circulation in } D\}.$$
  Note that $U \subseteq \{\mathbbm{1}_{P_0} + u \st u \in C\}$ because $u_P := \mathbbm{1}_P - \mathbbm{1}_{P_0}$ is a circulation for any $P \in \sets$.
  It is well-known that $\operatorname{dim} C = |A| - (|V| - 1) = |B|$. Hence, $\operatorname{dim} U \leq \operatorname{dim} U + 1 \leq |B| + 1$.
  Moreover, because the graph is acyclic, there is an ordering $b_1, \dots, b_k$ of $B$ such that $P_{b_i} \cap \{b_{1}, \dots, b_{i-1}\} = \emptyset$ for $i \in \{1, \dots, k\}$.
  Because also $P_0 \cap B = \emptyset$, the incidence vectors of the paths in $\bar{\sets}$ are linearly independent. As $|\bar{\sets}| = |B| + 1 = \operatorname{dim} U$, we conclude that $\bar{\sets}$ is indeed a basis of~$U$. 
  Finally, observe that extending the arc-incidence vectors by node incidences does not change the dimension of $U$, as for every path $P$ these entries are fully determined by the arcs $P$. 
  \qed
\end{proof}

\subsection{Computing $\mu$ Using the Algorithm of \citet{cela2023linear}}
\label{sec:quadratic-mu}

    \citeauthor{cela2023linear} give an explicit description~\citep[Eq.~(4)]{cela2023linear} of a linearization~$\mu' \in \mathbb{R}^A$ of an arbitrary function $\pi' \in \mathbb{R}^\sets$ fulfilling condition~\eqref{eq:conservation}, which they describe in terms of two-path systems~\citep[Proposition~1]{cela2023linear}.
    They also describe how to construct a basis $\bar{\sets}$ of the subspace spanned by the $s$-$t$-paths in $D$, similar to the one described in the proof of \cref{lem:computing-mu}~\citep[Definition~3]{cela2023linear}.
    The basis, the corresponding values of $\pi'_P$ for $P \in \bar{\sets}$, and $\mu'$ via the explicit description can be computed in time $\mathcal{O}(|A|^2 + |A|T_{\pi})$.

    Note that the vector $\mu'$ constructed by~\citet{cela2023linear} might contain negative entries, even when $\pi'$ is nonnegative.
    Still we can apply their results to obtain a solution $\mu$ to \eqref{eq:affine-transform} and~\eqref{eq:affine-nonneg} in time $\mathcal{O}(|A|^2 + |A|T_{\pi})$ as follows.
    We first use the algorithm of~\citep{cela2023linear} to obtain $\mu' \in \mathbb{R}^A$ fulfilling
        $\sum_{a \in P \cap A} \mu'_a = -\pi_P$ for all $P \in \sets$.
    We then define $\mu_a := \mu'_a + \phi_v - \phi_w$ for $a = (v, w) \in A$, where $\phi_v$ for $v \in V \setminus \{t\}$ denotes the shortest-path distance from $s$ to $v$ with respect to $\mu'$, and $\phi_t := -1$.
    Indeed, note that $\sum_{a \in P \cap A} \mu_a = -\phi_t + \sum_{a \in P \cap A} \mu'_a = 1 - \pi_P$. Moreover, note that $\mu_a \geq 0$ for $a = (v, w) \in A$ because $\phi_w \leq \mu'_a + \phi_v$, as the $\phi_v$ are shortest-path distances for $v \neq t$ and $\phi_t = -1 \leq \min_{P \in \sets} \mu'_a = - \max_{P \in \sets} \pi_P$ is at most the shortest-path distances from $s$ to $t$.
    
    Note that the running time of $\mathcal{O}(|A|^2 + |A|T_{\pi})$ is best possible when assuming that a query to the value oracle for $\pi$ requires the $s$-$t$-path to be passed explicitly, as in this case each query may require an arc set of size $\theta(|A|)$ as an argument.
    This does, however, not exclude the possibility that in some special cases, a subquadratic running time can be obtained by encoding $\pi$ in data structures supporting more efficient queries in an amortized sense.
    
\subsection{From Poset to DAG (Hasse diagram)}
\label{sec:hasse}

Note that \cref{thm:conservation-affine} requires $\sets$ to be the set of $s$-$t$-paths in a directed acyclic graph (where each path corresponds to the set of its arcs and nodes), which is a special case of maximal chains in a poset (in that each maximal chain in a DAG consists of an alternating sequence of nodes and arcs).
In this section, we discuss how to apply \cref{thm:conservation-affine} in the generic case where $\sets$ is the set of maximal chains of an arbitrary partially ordered set $(E, \sets)$ and $\pi \in [0, 1]^\sets$ fulfils the conservation law \eqref{eq:conservation}. 

We first observe that \cref{thm:conservation-affine} does not apply directly to the case where~$\sets$ is the set of maximal chains in an arbitrary poset. Indeed, consider the poset $E = \{a_0, a_1, z_0, z_1\}$ with $a_i \prec z_j$ for all $i, j \in \{0, 1\}$ and requirements defined by $\pi_{\{a_0, z_0\}} = 0$ and $\pi_{\{a_0,z_1\}} = \pi_{\{a_1,z_0\}} = \pi_{\{a_1,z_1\}} = 1$.
Note that \eqref{eq:conservation} is fulfilled for $\pi$ (in fact, any requirement vector fulfils \eqref{eq:conservation} for this poset).
However, $\pi$ is not of the form $1 - \sum_{e \in P} \mu_e$ for any $\mu \in [0,1]^E$ because $\pi_{\{a_0,z_1\}} = \pi_{\{a_1,z_0\}} = \pi_{\{a_1,z_1\}} = 1$ implies $\mu_{a_0} = \mu_{a_1} = \mu_{z_0} = \mu_{z1} = 0$, contradicting $\pi_{\{a_0, z_0\}} = 0$.

However, we can still make use of \cref{thm:conservation-affine} when given an arbitrary poset, by using its Hasse diagram representation.
The Hasse diagram of a poset $(E, \preceq)$ is the digraph $D = (V, A)$ with node set $V = E$ and arc set $A = \{(e, f) \st e, f \in E, e \prec f \text{ and there is no } f' \in E \text{ with } e \prec f' \prec f\}$.
Note that~$D$ is acyclic. 
Furthermore, if we assume without loss of generality that $\preceq$ has a unique minimal element $s \in E$ and a unique maximal element $t \in E$, then each $s$-$t$-path $P$ in $D$ corresponds to the maximal chain $C = P \cap V$ of $\preceq$ and vice versa.
We can thus interpret $\pi$ also as a requirement function on the paths of $D$ and note that \eqref{eq:conservation} is not affected by this transformation. 

Now we can apply \cref{thm:conservation-affine} to $D$ and $\pi$ obtaining a representation of $\pi$ of the form $\pi_P = 1 - \sum_{e \in P} \mu_e$ by weights $\mu \in [0, 1]^{V \cup A}$.\footnote{Note that these weights $\mu$ are also defined for the arcs of $D$, hence giving $|A|$ additional degrees of freedom compared to the original poset. Our earlier example shows that these degrees of freedom are indeed necessary to obtain an affine representation of $\pi$.}
Given any marginals $\rho \in [0, 1]^E$ in the poset fulfilling \eqref{eq:cond}, we can thus find a feasible decomposition by applying \cref{thm:computation} to the corresponding marginals in $D$, which arise from extending $\rho$ by $\rho_a = 0$ for all $a \in A$. Note that the corresponding decomposition only uses subsets of $V = E$ as $\rho_a = 0$ for $a \in A$, and hence is also a feasible decomposition for $\rho$ in $E$. We summarize these observations in the following corollary.

\begin{corollary}
    There is an algorithm that, given a poset $(E, \preceq)$, a value oracle for a function $\pi \in [0, 1]^\sets$ fulfilling \eqref{eq:conservation}, where $\sets$ is the set of maximal chains in $(E, \preceq)$, and $\rho \in [0, 1]^E$ fulfilling \eqref{eq:cond}, computes a feasible decomposition of $\rho$ with respect to $(E, \sets, \pi)$ in time polynomial in $|E|$ and $T_{\pi}$, where $T_{\pi}$ denotes the time for a call to the value oracle.
\end{corollary}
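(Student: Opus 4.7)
The plan is to follow the reduction sketched just before the corollary, chaining \cref{thm:conservation-affine} with \cref{thm:computation}. After adding, if necessary, a dummy minimum $s$ and maximum $t$ (which preserves \eqref{eq:conservation} and \eqref{eq:cond} and does not change the set of maximal chains modulo the dummies), I construct the Hasse diagram $D = (V, A)$ with $V = E$. The $s$-$t$-paths of $D$ are in bijection with the maximal chains of $(E, \preceq)$, so I can reinterpret $\pi$ as a requirement function on the $s$-$t$-paths of $D$, and \eqref{eq:conservation} carries over unchanged under this identification. Invoking \cref{thm:conservation-affine} then produces in strongly polynomial time a vector $\mu \in [0, 1]^{V \cup A}$ with $\pi_P = 1 - \sum_{e \in P} \mu_e$ for each $s$-$t$-path $P$; each path-value query needed by that theorem is simulated by a single call to the $\pi$-oracle on the corresponding maximal chain.

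Next, I extend $\rho$ to $\bar{\rho} \in [0, 1]^{V \cup A}$ by setting $\bar{\rho}_a := 0$ for every $a \in A$. For every $s$-$t$-path $P$ in $D$ with corresponding chain $C = P \cap V$, the original condition \eqref{eq:cond} yields
\begin{equation*}
  \textstyle \sum_{e \in P} \bar{\rho}_e \;=\; \sum_{e \in C} \rho_e \;\geq\; \pi_C \;=\; \pi_P,
\end{equation*}
so $\bar{\rho}$ satisfies \eqref{eq:cond} in $D$ with respect to the affine requirements determined by $\mu$. I then feed $D$, $\bar{\rho}$, and $\mu$ into \cref{thm:computation}, observing that the $s$-$t$-paths of a DAG form an abstract network whose membership oracle is trivially implementable in polynomial time (e.g., by a reachability computation on the subgraph induced by the query set). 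The theorem returns a feasible decomposition $\bar{x}$ of $\bar{\rho}$ with respect to $(V \cup A, \sets_D, \pi)$ in time polynomial in $|V \cup A| = O(|E|^2)$.

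Finally, since $\bar{\rho}_a = 0$ for every $a \in A$, any $S \subseteq V \cup A$ with $\bar{x}_S > 0$ must satisfy $S \subseteq V = E$, so $\bar{x}$ directly induces a probability distribution on subsets of $E$ that hits every maximal chain $C = P \cap V$ with probability at least $\pi_C = \pi_P$. This is precisely the sought feasible decomposition of $\rho$ for $(E, \sets, \pi)$. The main subtlety I expect is verifying semantic faithfulness of the Hasse-diagram reduction, namely that arcs can be treated as dummy elements that never appear in a realized inspection set; this is handled by the zero-marginal argument above. All intermediate steps are strongly polynomial in $|E|$ and $T_\pi$, so the overall running time meets the claimed bound.
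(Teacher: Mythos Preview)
Your proposal is correct and follows essentially the same route as the paper: pass to the Hasse diagram, apply \cref{thm:conservation-affine} to obtain the affine representation $\mu$, extend $\rho$ by zero on the arcs, invoke \cref{thm:computation}, and observe via the marginal constraint~\eqref{eq:consistency} that arcs (having zero marginal) never appear in the support of the resulting distribution. The only additions you make over the paper's own discussion are the explicit handling of dummy $s,t$ and the remark on implementing the membership oracle, both of which are routine.
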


\section{\texorpdfstring{Feasible Decompositions in the Absence of \eqref{eq:cond}-Sufficiency}{}}
\label{sec:other-systems}

In this final section, we turn to the question whether we can obtain feasible decompositions in systems not exhibiting \eqref{eq:cond}-sufficiency.
We first observe that for any marginals fulfilling~\eqref{eq:cond}, at least an \emph{approximately feasible} decomposition always exist and can be easily constructed.
However, we also show that determining the feasibility of marginals in arbitrary systems is NP-complete in general.

\subsection{Approximately Feasible Decompositions}
We say that a decomposition $x$ of marginals $\rho$ is \emph{$\beta$-approximately feasible}, for a scalar $\beta \in [0, 1]$, if it fulfils \eqref{eq:consistency}, \eqref{eq:simplex}, \eqref{eq:nonneg}, and 
\begin{align*}
    \elsum{S \subseteq E : S \cap P \neq \emptyset} x_S \;\geq\; \beta \cdot \pi_P \quad \forall\, P \in \sets.
\end{align*}
Indeed, for $\beta = 1 - \nicefrac{1}{\text{e}}$ such a decomposition exists for any set of marginals fulfilling \eqref{eq:cond}, without imposing any further conditions on $(E, \sets)$ or $\pi$, as shown by the theorem below.
An interesting question for future research is whether better guarantees can be obtained for particular classes of set systems.

\begin{theorem}
    Let $(E, \sets)$ be a set system, $\pi \in [0,1]^\sets$, and $\rho \in [0, 1]^E$ such that condition \eqref{eq:cond} is fulfilled.
    Let $S_{\rho} \subseteq E$ be the random set that contains each~$e \in E$ independently with probability $\rho_e$ and let $x_S = \prob{S_{\rho} = S}$. Then~$x$ is a $(1 - \nicefrac{1}{e})$-approximately feasible decomposition of $\rho$ for $(E, \sets, \pi)$.
\end{theorem}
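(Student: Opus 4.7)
The plan is to verify the four decomposition conditions one by one, with the covering condition being the only nontrivial one. Consistency~\eqref{eq:consistency} is immediate: since each element $e \in E$ is included in $S_\rho$ independently with probability $\rho_e$, we have $\sum_{S \subseteq E : e \in S} x_S = \prob{e \in S_\rho} = \rho_e$. Conditions \eqref{eq:simplex} and \eqref{eq:nonneg} hold by construction of $x$ as a probability distribution.

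For the approximate covering condition, I would bound the hitting probability from below using independence:
\begin{align*}
    \prob{S_\rho \cap P \neq \emptyset} \;=\; 1 - \prod_{e \in P}(1 - \rho_e) \;\geq\; 1 - \prod_{e \in P} e^{-\rho_e} \;=\; 1 - e^{-\sum_{e \in P} \rho_e} \;\geq\; 1 - e^{-\pi_P},
\end{align*}
where the first inequality uses $1 - t \leq e^{-t}$ for $t \in [0, 1]$, and the last inequality uses condition \eqref{eq:cond}, i.e., $\sum_{e \in P} \rho_e \geq \pi_P$.

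It remains to argue that $1 - e^{-t} \geq (1 - 1/\text{e})\, t$ for all $t \in [0, 1]$, which then applied to $t = \pi_P \in [0, 1]$ yields $\prob{S_\rho \cap P \neq \emptyset} \geq (1 - 1/\text{e}) \pi_P$. This is the standard concavity argument: the function $g(t) := 1 - e^{-t}$ is concave on $[0, 1]$ with $g(0) = 0$ and $g(1) = 1 - 1/\text{e}$, so by concavity $g(t) \geq t \cdot g(1) + (1 - t) \cdot g(0) = (1 - 1/\text{e})\, t$ on $[0, 1]$. There is no real obstacle here; the only thing to be careful about is ensuring $\pi_P \in [0, 1]$ so that the concavity bound applies, which holds by assumption.
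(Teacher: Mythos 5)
Your proof is correct and follows essentially the same route as the paper: the same independent-sampling decomposition, a lower bound of $1 - e^{-\pi_P}$ on the hitting probability, and the concavity bound $1 - e^{-t} \geq (1 - \nicefrac{1}{e})\,t$ on $[0,1]$. The only (immaterial) difference is that you reach $1 - e^{-\sum_{e \in P}\rho_e}$ directly via $1 - t \leq e^{-t}$, whereas the paper goes through the intermediate estimate $1 - \bigl(1 - \tfrac{\sum_{e\in P}\rho_e}{|P|}\bigr)^{|P|}$ before passing to the exponential.
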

\begin{proof}
    By construction, $x$ fulfils \eqref{eq:consistency}, \eqref{eq:simplex}, and \eqref{eq:nonneg}.
    Let $P \in \sets$. Using that $\sum_{e \in P} \rho_e \geq \pi_P$ and $\pi_P \in [0, 1]$, we observe that
    \begin{align*}
        \prob{S_\rho \cap P \neq \emptyset} & = 1 - \prod_{e \in P} \prob{e \notin S_{\rho}} = 1 - \prod_{e \in P} (1 - \rho_e)\\
        & \geq 1 - \left(1 - \tfrac{\sum_{e \in P} \rho_e}{|P|}\right)^{|P|} \geq 1 - \left(1 - \tfrac{\pi_P}{|P|}\right)^{|P|}\\
        & \geq 1 - \text{e}^{-\pi_P} \geq (1 - \text{e}^{-1}) \pi_P,
    \end{align*}
    which proves the lemma. \qed
\end{proof}

\subsection{Hardness of Determining Feasibility of Marginals}
For a given triplet $(E, \sets, \pi)$ and marginals $\rho$, we may also be interested in finding a decomposition that is $\beta$-approximately feasible for the largest possible value of $\beta$. 
Note that solving this problem includes, in particular, deciding whether  $\rho$ has a feasible decomposition (i.e., $\beta = 1$).
Unfortunately, this latter problem is NP-complete, even in quite restricted cases, as evidenced by the theorem below. 

\begin{restatable}{theorem}{restateThmHardness}
    The following decision problem is NP-complete: Given a set system $(E, \sets)$ with $|P| = 3$ for all $P \in \sets$ and marginals $\rho \in [0, 1]^E$, is there a feasible decomposition of $\rho$ for $(E, \sets)$ and requirement vector $\pi \equiv 1$?
 \end{restatable}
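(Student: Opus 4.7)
The plan is to establish NP-membership and NP-hardness separately.

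For \textbf{NP-membership}, note that the set of marginals admitting a feasible decomposition for $(E, \sets, \pi \equiv 1)$ is precisely the polytope $Q := \operatorname{conv}\{\mathbbm{1}_S \st S \subseteq E,\ S \cap P \neq \emptyset \text{ for all } P \in \sets\}$ in $\mathbb{R}^E$, since $\sum_S x_S = 1$ together with $\sum_{S \cap P \neq \emptyset} x_S \geq 1$ forces every set $S$ in the support of $x$ to be a transversal. By Carath\'eodory's theorem, any $\rho \in Q$ admits a representation as a convex combination of at most $|E|+1$ transversal incidence vectors; for such a list of transversals, the weights are determined by a linear system with $0/1$ constraint matrix and rational right-hand side, and hence admit a rational solution of polynomial bit complexity. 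A list of at most $|E|+1$ transversals together with their weights is therefore a polynomial-size certificate, checkable in polynomial time.

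For \textbf{NP-hardness}, I would reduce from graph $3$-colorability. Given $G = (V, E_G)$, I take $E := \{g\} \cup \bigcup_{v \in V} \{v_1, v_2, v_3, \bar{v}_1, \bar{v}_2, \bar{v}_3\}$, and let $\sets$ consist of three families of triples: the \emph{color-choice sets} $\{v_1, v_2, v_3\}$ for each $v \in V$; the \emph{negation sets} $\{v_c, \bar{v}_c, g\}$ for each $v \in V$ and $c \in \{1,2,3\}$; and the \emph{edge sets} $\{\bar{u}_c, v_{c_1}, v_{c_2}\}$ for each $(u,v) \in E_G$ and each $c \in \{1,2,3\}$, where $\{c_1, c_2\} = \{1,2,3\} \setminus \{c\}$. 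The marginals are $\rho_{v_c} := \tfrac{1}{3}$, $\rho_{\bar{v}_c} := \tfrac{2}{3}$, and $\rho_g := 0$; one verifies directly that $\sum_{e \in P} \rho_e \geq 1$ for every $P \in \sets$.

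The crucial observation is that both the color-choice sets and the negation sets satisfy $\sum_{e \in P} \rho_e = 1$ exactly. Combined with $\pi_P = 1$, this forces $|S \cap P| = 1$ almost surely under any feasible decomposition, because $\mathbb{E}[|S \cap P|] = \sum_{e \in P} \rho_e = 1$ while $|S \cap P| \geq 1$ almost surely. Thus every sampled $S$ excludes $g$, contains exactly one of $v_c$ and $\bar{v}_c$ for each pair $(v,c)$, and contains exactly one $v_c$ per vertex, determining an assignment $\chi_S : V \to \{1,2,3\}$ in which $\bar{v}_c \in S$ if and only if $\chi_S(v) \neq c$. The edge sets then enforce $\chi_S(u) \neq c$ or $\chi_S(v) \neq c$ for every edge and every $c$, equivalently $\chi_S(u) \neq \chi_S(v)$, so $\chi_S$ is a proper $3$-coloring of $G$. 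Conversely, given any proper $3$-coloring $\chi$, each permutation $\sigma$ of $\{1,2,3\}$ yields a transversal $S_\sigma := \{v_{\sigma(\chi(v))} : v \in V\} \cup \{\bar{v}_c : v \in V,\ c \neq \sigma(\chi(v))\}$, and the uniform distribution over these six transversals realizes exactly the prescribed marginals, since $\Pr[\sigma(\chi(v)) = c] = \tfrac{1}{3}$ for every $v, c$.

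The main obstacle is crafting a negation gadget compatible with the $|P| = 3$ restriction: the dummy element $g$ with $\rho_g = 0$ serves precisely to inflate the two-element binding $\{v_c, \bar{v}_c\}$ into a triple, while its zero marginal prevents it from appearing in the realized supports. Once this gadget is in place, both directions of the correctness argument reduce to routine applications of the $|S \cap P| = 1$ rigidity derived from the tight instances of condition~\eqref{eq:cond}.
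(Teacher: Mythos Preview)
Your proof is correct. The NP-membership argument is essentially the same as the paper's (both observe that feasibility reduces to membership in a polytope whose certificates have polynomial size).

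For NP-hardness, however, you take a genuinely different route. The paper reduces from NAE3SAT: it sets $E$ to be the set of literals, $\rho_e \equiv \tfrac12$, and $\sets$ to consist of the clauses $C$, their negations $\bar C$, and the variable-pairs $\{y,\neg y\}$; the same tightness argument ($\sum_{e\in P}\rho_e = 1 = \pi_P$ forces $|S\cap P|=1$ almost surely) then makes every set in the support a truth assignment satisfying the NAE condition. Your reduction from $3$-colorability is more elaborate but has one concrete advantage: by padding the two-element ``negation'' constraint with a dummy element of zero marginal, every set in your system genuinely has cardinality three, whereas the paper's variable-pair sets $\{y,\neg y\}$ have size two (a gap that can of course be patched the same way). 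Conversely, the paper's construction is leaner---uniform marginals $\tfrac12$ and no auxiliary elements---and the correspondence between supports and solutions is a direct bijection rather than going through a permutation average. Both arguments hinge on the identical rigidity observation that tight instances of~\eqref{eq:cond} force exactly one element per set into every realized~$S$.
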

    
\begin{proof}
    To see that the problem is indeed in NP, note that a given instance is a yes instance if and only if the corresponding linear system \eqref{eq:consistency} to \eqref{eq:nonneg} has a feasible solution.
    Note that all extreme point solutions of this linear system have at most a polynomial number of non-zeros as the number of constraints \eqref{eq:consistency} to \eqref{eq:simplex} is polynomial in $|E|$ (because $|\sets| \leq |E|^3$). Such an extreme point solution is thus a polynomially sized and verifiable certificate for a yes instance.
    
    We now show completeness via reduction from the NP-complete \emph{Not-all-equal 3-Satisfiability} (NAE3SAT) problem~\citep{schaefer1978complexity}.
    In this problem, we are given a finite set of boolean variables $Y$ and a finite set of clauses $\mathcal{C}$, with each clause consisting of the disjunction of three literals of the variables.
    We let $L := \{y, \neg y \st y \in Y\}$ be the set of literals on $Y$ and identify each clause $C \in \mathcal{C}$ with the set of its literals (e.g., $C = \{y_1, \neg y_3, y_5\}$).
    A \emph{truth assignment} is a subset of $A \subseteq L$ of the literals  containing for each variable $y \in Y$ either $y$ or $\neg y$. 
    The goal of NAE3SAT is to decide whether there is a truth assignment $A$ such that $|A \cap C| \in \{1, 2\}$ for each $C \in \mathcal{C}$, i.e., the truth assignment satisfies at least one, but not all literals of each clause.

    For a truth assignment $A$, let $\bar{A} := Y \setminus A$ denote the corresponding complementary assignment.
    For a clause $C \in \mathcal{C}$, let $$\bar{C} := \{\neg y \st y \in C \cap Y\} \cup \{y \st \neg y \in C\}$$ denote the corresponding complementary clause consisting of the negation of its literals.
    We will make use of the following observation.
    \begin{observation}\label[observation]{obs:truth-assignment}
        For any truth assignment $A$, the following statements are equivalent:
        \begin{enumerate}
            \item $|A \cap C| \in \{1, 2\}$ for every $C \in \mathcal{C}$.
            \item $A \cap C \neq \emptyset$ and $\bar{A} \cap C \neq \emptyset$ for all $C \in \mathcal{C}$.
            \item $A \cap C \neq \emptyset$ and $A \cap \bar{C} \neq \emptyset$ for all $C \in \mathcal{C}$.
            \item $\bar{A} \cap C \neq \emptyset$ and $\bar{A} \cap \bar{C} \neq \emptyset$ for all $C \in \mathcal{C}$.
        \end{enumerate}
    \end{observation}
    
    Given an instance of NAE3SAT, define the set system $(E, \sets)$ with $E := L$ and
    $$\sets := \{C, \bar{C} \st C \in \mathcal{C}\} \cup \{P_y : y \in Y\},$$
    where $P_y := \{y, \neg y\}$.
    Let $\pi \equiv 1$ and $\rho_e = 0.5$ for all $e \in E$.
    We show that there is a feasible decomposition of $\rho$ for $(E, \sets)$ and $\pi$ if and only if there is a truth assignment $A$ of the variables in $Y$ with $|A \cap C| \in \{1, 2\}$ for all $C \in \mathcal{C}$.
    
    First assume there is a truth assignment $A$ of the variables in $Y$ with $|A \cap C| \in \{1, 2\}$ for all $C \in \mathcal{C}$.
    Then define $x$ by setting $x_A = 0.5$, $x_{\bar{A}} = 0.5$, and $x_S = 0$ for all $S \in 2^E \setminus \{A, \bar{A}\}$.
    Note that $x$ fulfils \eqref{eq:simplex} and \eqref{eq:nonneg} by construction.
    Moreover, because $A, \bar{A}$ is a partition of $E = L$, note that $x$ fulfils \eqref{eq:consistency} and $\sum_{S : S \cap \{y, \neg y\} \neq \emptyset} x_S = x_A + x_{\bar{A}} = 1$ for all $y \in Y$. 
    Finally, because $|A \cap C| \in \{1, 2\}$, note that $A \cap C, A \cap \bar{C}, \bar{A} \cap C, \bar{A} \cap \bar{C}$ are all nonempty for every $C \in \mathcal{C}$ by \cref{obs:truth-assignment}, and hence $\sum_{S : S \cap C} x_S = x_A + x_{\bar{A}} = 1$ and $\sum_{S : S \cap \bar{C}} x_S = x_A + x_{\bar{A}} = 1$ for all $C \in \mathcal{C}$.
    Thus $x$ also fulfils \eqref{eq:covering} and we can conclude that $x$ is a feasible decomposition of $\rho$.
    
    Now assume $\rho$ has a feasible decomposition $x$. Let $S \subseteq E$ with $x_S > 0$.  Note that 
    $$1 \;=\; \pi_{P_y} \;\leq\; \elsum{S : S \cap P_y \neq \emptyset} x_S \;\leq\; \elsum{S : S \cap P_y \neq \emptyset} |S \cap P_y| x_S \;\leq\; \elsum{S \subseteq E} x_S \;\leq\; 1,$$ 
    for all $y \in Y$, and hence 
    $|S \cap P_y| = 1$ for all $S$ with $x_S > 0$.
    Thus also $|S \cap P_y| = 1$ for all $y \in Y$, i.e., $S$ contains exactly one of the literals $y, \neg y$ for every $y \in Y$. 
    Hence $S$ is a truth assignment.
    Similarly, note that 
    \begin{align*}
        1 \;\leq\; \elsum{S : S \cap C \neq \emptyset} x_S \;\leq\; \sum_{S \subseteq E} x_S \;\leq\; 1 \quad \text{and} \quad 1 \;\leq\; \elsum{S : S \cap \bar{C} \neq \emptyset} x_S \;\leq\; \sum_{S \subseteq E} x_S \;\leq\; 1
    \end{align*} 
    for all $C \in \mathcal{C}$, and hence $x_S > 0$ implies $S \cap C \neq \emptyset \neq S \cap \bar{C}$. Hence, by \cref{obs:truth-assignment}, $S$ is a truth assignment fulfilling $|S \cap C| \in \{1, 2\}$ for all~\mbox{$C \in \mathcal{C}$}.~\qed
\end{proof}

 \section*{Statements and Declarations}
 
 \paragraph{Acknowledgements.} 
 This work was supported by the special research fund of KU~Leuven (project C14/22/026) and by the Center for Mathematical Modeling at Universidad de Chile (Grant ANID FB210005).
 The author thanks three anonymous reviewers of the conference version for numerous helpful suggestions that improved the manuscript. 

\paragraph{Competing Interests.}
The author confirms to have no conflicts of interest associated with this publication and to have not received any financial support that could have influenced the outcome of this work. 

\bibliographystyle{plainnat}
\bibliography{decomposition}
    
\end{document}